\newtheorem{theorem}{Theorem}[section]
\newtheorem{lem}{Lemma}[section]
\newtheorem{remark}{Remark}[section]
\theoremstyle{remark}
\newtheorem{algorithm}{Algorithm}
\newcommand{\mbf}[1]{\mbox{\boldmath$#1$}}
\numberwithin{equation}{section}
\title{Subdifferential-based implicit return-mapping operators in Mohr-Coulomb plasticity}
\author{S. Sysala$^1$, M. Cermak$^{2}$ \\ \\ $^1$Institute of Geonics, Czech Academy of Sciences, Ostrava, Czech
Republic\\ $^2$V\v SB--Technical University of Ostrava, Ostrava, Czech Republic}
\begin{document}

\maketitle

\begin{abstract}
The paper is devoted to a constitutive solution, limit load analysis and Newton-like methods in elastoplastic problems containing the Mohr-Coulomb yield criterion. Within the constitutive problem, we introduce a self-contained derivation of the implicit return-mapping solution scheme using a recent subdifferential-based treatment. Unlike conventional techniques based on Koiter's rules, the presented scheme a priori detects a position of the unknown stress tensor on the yield surface even if the constitutive solution cannot be found in closed form. This fact eliminates blind guesswork from the scheme, enables to analyze properties of the constitutive operator, and simplifies construction of the consistent tangent operator which is important for the semismooth Newton method applied on the incremental boundary value elastoplastic problem. The incremental problem in Mohr-Coulomb plasticity is combined with the limit load analysis. Beside a conventional direct method of the incremental limit analysis, a recent indirect one is introduced and its advantages are described. The paper contains 2D and 3D numerical experiments on slope stability with publicly available Matlab implementations.

\end{abstract}

\noindent
Keywords: infinitesimal plasticity, Mohr-Coulomb yield surface, implicit return-mapping scheme, consistent tangent operator, semismooth Newton method, incremental limit analysis, slope stability

\section{Introduction}

This paper is a continuation of \cite{SCKKZB15} which was devoted to a solution of elastoplastic constitutive problems using a subdifferential formulation of the plastic flow rule. It leads to simpler and more correct implicit constitutive solution schemes. While a broad class of elastoplastic models containing 1 or 2 singular points (apices) on the yield surface was considered in \cite{SCKKZB15}, the aim of this paper is to approach the subdifferential-based treatment to models that are usually formulated in terms of principal stresses.  For example, the principal stresses are used in models containing the Mohr-Coulomb, the Tresca, the Rankine, the Hoek-Brown or the unified strength yield criteria \cite{NPO08,CDA15,LL15,LR96}. Such criteria have a {\it multisurface} representation leading to a relatively complex structure of singular points. 

Due to technical complexity of implicit solution schemes for these models, we focus only on a particular but representative yield criterion: the Mohr-Coulomb one. This criterion is broadly exploited in soil and rock mechanics and its surface is a hexagonal pyramid aligned with the hydrostatic axis (see, e.g., \cite{NPO08}). We consider the Mohr-Coulomb model introduced in \cite[Section 8]{NPO08} which can optionally contain the {\it nonassociative flow rule} and the {\it nonlinear isotropic hardening}. The nonassociative flow rule enables to catch the dilatant behavior of a material. Further, due to the presence of the nonlinear hardening, one cannot find the implicit constitutive solution in closed form, and thus the problem remains challenging. As in \cite{NPO08}, we let a hardening function in an abstract form. For a particular example of the nonlinear hardening in soil mechanics, we refer, e.g., \cite{BSS03}.

In literature, there are many various concepts of the constitutive solution schemes for models containing yield criteria written in terms of the principal stresses. For their detailed overview and historical development, we refer the recent papers \cite{CDA15} and \cite{K13}, respectively. It is worth mentioning that the solution schemes mainly depend on a formulation of the plastic flow rule, its discretization and other eventual approximations.

In engineering practice, the plastic flow rule is usually formulated using the so-called Koiter rule introduced in \cite{K53} for associative models with multisurface yield criteria. Consequently, this rule was also extended for nonassociative models, see, e.g., \cite{dB87}. It consists of several formulas that depend on a position of the unknown stress tensor $\mbf\sigma$ on the yield surface. The formulas have a different number of plastic multipliers. Within the Mohr-Coulomb pyramid, one plastic multiplier is used for smooth portions, two multipliers at edge points, and six multipliers at the apex. For each Koiter's formula, a different solution scheme is introduced. However,  only one of which usually gives the correct stress tensor. Moreover, the handling with different numbers of plastic multipliers is not suitable for analysing the stress-strain operator even if the solution can be found in closed form. If an elastoplastic model contains a {\it convex plastic potential} as the Mohr-Coulomb one then it is possible to replace the Koiter rule with a subdifferential of the potential (see, e.g., \cite{NPO08}). Such a formulation is independent of the unknown stress position, contains just one plastic multiplier, and thus it is more convenient for mathematical analysis of the constitutive operators. In \cite{SCKKZB15}, it was shown that this formulation is also convenient for a solution of some constitutive problems. Further, in some special cases, the constitutive problem can  be also defined using the principle of maximum plastic dissipation \cite{NPO08, HR99} or by the theory of bipotentials \cite{B12} and solved by techniques based on mathematical programming. 

We focus on the (fully) implicit Euler discretization of the flow rule, which is frequently used in elastoplasticity. Beside other Euler-type methods (see, e.g., \cite{NPO08, SH98}), the cutting plane methods are also popular. We refer, e.g., \cite{SHVS14} for the literature survey and recent development of these methods. When the constitutive problems are discretized by the implicit Euler methods, the solution is searched by the elastic predictor -- plastic correction method. Within the plastic correction, the so-called (implicit) return-mapping scheme is constructed. It is worth mentioning that plastic correction problems can be reduced to problems formulated only in terms of the principal stresses \cite{CDA06, CDA15, NPO08}.

In order to simplify the solution schemes for nonsmooth yield criteria, many various approximative techniques have been suggested. These techniques are based on local or global smoothing of yield surfaces or plastic potentials. For literature survey, we refer \cite[Section 1.2]{CDA15} or \cite{ALSH11, B13, BSS03}. However, such an approach is out of the scope of this paper.

The constitutive problem is an essential part of the overall initial boundary value elastoplastic problem. Its time discretization leads to the incremental boundary value problem which is mostly solved by nonsmooth variants of the Newton method \cite{CKSV14, GrVa09, SaWi11, Sy09, Sy14} in each time step. Then, it is useful to construct the so-called consistent tangent operator representing a generalized derivative of the discretized constitutive stress-strain operator. We use the framework based on the eigenprojections of symmetric second order tensors, see, e.g., \cite{CaHo86, NPO08}. A similar approach is also used in the recent book \cite{B13} with slightly different terminology like the spectral directions or the spin of a tensor. Another approach is introduced, e.g., in \cite{C97, CDA06, CDA15} where the consistent tangent operator is determined by the tangent operator representing the relation between the stress and strain rates.

Further, this paper is devoted to the limit load problem which is frequently combined with the Mohr-Coulomb model. It is an additional problem to the elastoplastic one where the load history is not fully prescribed. It is only given a fixed external force that is multiplied by a scalar load parameter whose limit value is unknown. It is well known that the investigated body collapses when this critical value is exceeded. Therefore, this value is an important safety parameter and beyond it no solution exists. Strip-footing collapse or slope stability are traditional applications on this problematic (see, e.g. \cite{CL90, NPO08}).  The simplest computational technique is based on the so-called {\it incremental limit analysis} where the load parameter is enlarged up to its limit value. Then, the boundary-value elastoplastic problem is solved for investigated values of this parameter. Beside the conventional {\it direct method} of the incremental limit analysis, we also introduce the {\it indirect method} and describe its advantages based on recent expertise introduced in \cite{SHHC15, CHKS15, HRS16, HRS16b}.

The rest of the paper is organized as follows. In Section \ref{sec_spectrum}, an auxilliary framework related to the subdifferential of an eigenvalue function and derivatives of eigenprojections is introduced. In Section \ref{sec_model}, the Mohr-Coulomb constitutive initial value problem is formulated using the subdifferential of the plastic potential and discretized by the implicit Euler method. In Section \ref{sec_time_discret}, the existence and uniqueness of a solution to the discretized problem is proven and the improved solution scheme is derived. In Section \ref{sec.Stress-strain_relation}, the stress-strain and the consistent tangent operators are constructed. In Section \ref{sec_realization}, the direct and indirect methods of the incremental limit analysis are introduced. Both methods are combined with the semismooth Newton method. In Section \ref{sec_experiments}, 2D and 3D numerical experiments related to slope stability are introduced. In Section \ref{sec_conclusion}, some concluding remarks are mentioned. The paper also contains Appendix with some useful auxilliary results. In Appendix A, the solution scheme is simplified under the plane strain assumptions. In Appendix B, algebraic representations for second and fourth order tensors within the 3D and plane strain problems are derived.

In this paper, second order tensors, matrices, and vectors are denoted by
bold letters.  Further, the fourth order tensors
are denoted by capital blackboard letters, e.g., $\mathbb D_e$ or $\mathbb
I$. The symbol $\otimes$ means the tensor product \cite{NPO08}. We also use the following
notation: $\mathbb R_+:=\{z\in\mathbb R;\; z\geq0\}$ and $\mathbb R^{3\times
3}_{sym}$ for the space of symmetric, second order tensors. The standard scalar product in $\mathbb R^{3}$ and the biscalar product in $\mathbb R^{3\times
3}_{sym}$ are denoted as $\cdot$ and $:$, respectively.

\section{Subdifferentials and derivatives of eigenvalue functions}
\label{sec_spectrum}

In this section, we introduce an auxilliary framework that will be crucial for an efficient construction of the constitutive and consistent tangent operators in Mohr-Coulomb plasticity. Let
\begin{equation}
\mbf\eta=\sum_{i=1}^3 \eta_i\mathbf e_i\otimes\mathbf e_i,\quad \eta_1\geq \eta_2\geq \eta_3,
\label{spectrum}
\end{equation}
be the spectral decomposition of a tensor $\mbf\eta\in\mathbb R^{3\times3}_{sym}$. Here, $\eta_i\in\mathbb R$, $\mathbf e_i\in\mathbb R^3$, $i=1,2,3$, denote the eigenvalues, and the eigenvectors of $\mbf\eta$, respectively. The eigenvalues  $\eta_1,\eta_2,\eta_3$ can be computed using the Haigh-Westargaard coordinates (see, e.g., \cite[Appendix A]{NPO08}), and they are uniquely determined with respect to the prescribed ordering. Let $\omega_1,\omega_2,\omega_3$ denote the corresponding eigenvalue functions, i.e. $\eta_i:=\omega_i(\mbf\eta)$, $i=1,2,3$. Further, we define the following set of admissible eigenvectors of $\mbf\eta$: 
$$V(\mbf\eta)=\{(\mathbf e_1,\mathbf e_2,\mathbf e_3)\in \mathbb R^3\times\mathbb R^3\times\mathbb R^3\ |\; \mathbf e_i\cdot\mathbf e_j=\delta_{ij};\; \mbf\eta\mathbf e_i=\eta_i\mathbf e_i,\; i,j =1,2,3;\; \eta_1\geq \eta_2\geq \eta_3\}.$$

\subsection{Subdifferential of an eigenvalue function}
\label{subsec_subdif}

Recall the definition of the subdifferential to a convex function $g:\mathbb R^{3\times 3}_{sym}\rightarrow \mathbb R$ at $\mbf\eta$:
$$
\partial g(\mbf\eta)=\{\mbf\nu\in\mathbb R^{3\times 3}_{sym}\ |\;  g(\mbf\tau)\geq g(\mbf\eta)+\mbf\nu:(\mbf\tau-\mbf\eta)\;\;\forall \mbf\tau\in\mathbb R^{3\times 3}_{sym}\}.
$$
To receive the Mohr-Coulomb yield function or the plastic potential, we specify $g$ as follows:
\begin{equation}
g(\mbf\eta)=a\omega_1(\mbf\eta)-b\omega_3(\mbf\eta),\quad  \mbf\eta\in\mathbb R^{3\times3}_{sym},
\label{g_ab}
\end{equation}
where the parameters $a,b\geq0$ are sufficiently chosen. Notice that the convexity of the eigenvalue function $g$ can be derived from:
\begin{equation}
\omega_1(\mbf\eta)=\max_{\substack{\mathbf e\in\mathbb R^3\\ |\mathbf e|=1}}\mbf\eta:(\mathbf e\otimes\mathbf e)=\max_{\substack{\mathbf e\in\mathbb R^3\\ |\mathbf e|=1}}\ (\mbf\eta\mathbf e)\cdot\mathbf e,\quad\omega_3(\mbf\eta)=\min_{\substack{\mathbf e\in\mathbb R^3\\ |\mathbf e|=1}}\mbf\eta:(\mathbf e\otimes\mathbf e).
\label{eig_maxmin}
\end{equation}
Specific form of $\partial g(\mbf\eta)$ with respect to (\ref{g_ab}) can be found using a framework introduced in \cite[Chapter 2]{Ru06}. We derive another form of $\partial g(\mbf\eta)$ that is convenient for purposes of this paper.

\begin{lem}
Let $g:\mathbb R^{3\times 3}_{sym}\rightarrow \mathbb R$ be defined by (\ref{g_ab}). Then for any $\mbf\eta\in\mathbb R^{3\times 3}_{sym}$, it holds:
\begin{eqnarray}
\partial g(\mbf\eta)&=&\left\{ \mbf\nu=\sum_{i=1}^3\nu_i\mathbf e_i\otimes\mathbf e_i\in\mathbb R^{3\times 3}_{sym}\ |\; (\mathbf e_1, \mathbf e_2, \mathbf e_3)\in V(\mbf\eta); \; a\geq\nu_1\geq\nu_2\geq\nu_3\geq -b;\right.\nonumber\\
&&\quad \left.\sum_{i=1}^3\nu_i=a-b;\; (\nu_1-a)[\omega_1(\mbf\eta)-\omega_2(\mbf\eta)]=0;\; (\nu_3+b)[\omega_2(\mbf\eta)-\omega_3(\mbf\eta)]=0 \right\}.
\label{sub_g_ab}
\end{eqnarray}
\label{lem_subdif}
\end{lem}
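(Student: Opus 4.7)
The plan is to compute $\partial g(\mbf\eta)$ via the Moreau--Rockafellar sum rule together with the variational formulas \eqref{eig_maxmin}. Both $a\omega_1$ and $-b\omega_3$ are finite and continuous convex functions on $\mathbb R^{3\times 3}_{sym}$ (convexity is immediate from \eqref{eig_maxmin}, as each is a supremum of the linear functionals $\mbf\eta\mapsto\pm\mbf\eta:(\mathbf e\otimes\mathbf e)$), so the sum rule applies without any qualification and gives
\[
\partial g(\mbf\eta)=a\,\partial\omega_1(\mbf\eta)+b\,\partial(-\omega_3)(\mbf\eta).
\]
Identifying each summand and re-expressing the sum in the form of the right-hand side of \eqref{sub_g_ab} then constitutes the whole argument.

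The standard formula for the subdifferential of a supremum of linear functionals (Danskin-type) yields
\[
\partial\omega_1(\mbf\eta)=\operatorname{conv}\{\mathbf e\otimes\mathbf e\,:\,|\mathbf e|=1,\ \mbf\eta\mathbf e=\omega_1(\mbf\eta)\mathbf e\},
\]
which a routine spectral argument identifies with the set of symmetric PSD tensors $\mathbf P$ of trace $1$ whose range lies in the top eigenspace $W_1$ of $\mbf\eta$. An analogous description with the bottom eigenspace $W_3$ holds for $\partial(-\omega_3)(\mbf\eta)$. Consequently every $\mbf\nu\in\partial g(\mbf\eta)$ can be written as $\mbf\nu=a\mathbf P-b\mathbf Q$ with $\mathbf P,\mathbf Q\succeq 0$, $\operatorname{tr}\mathbf P=\operatorname{tr}\mathbf Q=1$, $\operatorname{range}(\mathbf P)\subseteq W_1$ and $\operatorname{range}(\mathbf Q)\subseteq W_3$.

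To recover \eqref{sub_g_ab}, I would split into cases according to the multiplicities of the eigenvalues of $\mbf\eta$. If $\eta_1>\eta_2>\eta_3$ then $W_1,W_3$ are one-dimensional, forcing $\mathbf P=\mathbf e_1\otimes\mathbf e_1$ and $\mathbf Q=\mathbf e_3\otimes\mathbf e_3$, hence $\nu_1=a$, $\nu_3=-b$, and $\nu_2=0$ from the trace condition; the factor-zero conditions in \eqref{sub_g_ab} are vacuous because both multiplicity gaps are positive. In the two intermediate cases ($\eta_1=\eta_2>\eta_3$ or $\eta_1>\eta_2=\eta_3$) exactly one of $W_1,W_3$ is two-dimensional; I would diagonalize the corresponding PSD tensor in an ordered ONB of that eigenspace, append the simple eigenvector of $\mbf\eta$ as the third vector, and verify by inspection that the resulting components obey the ordering chain $a\geq\nu_1\geq\nu_2\geq\nu_3\geq-b$ and the remaining factor-zero condition.

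The delicate step is the fully degenerate case $\eta_1=\eta_2=\eta_3$, in which $W_1=W_3=\mathbb R^3$ and $\mathbf P,\mathbf Q$ need not commute. For the inclusion ``$\subseteq$'' I would apply the operator bounds $0\preceq\mathbf P,\mathbf Q\preceq\mathbf I$ to obtain $-b\mathbf I\preceq\mbf\nu\preceq a\mathbf I$; any eigenbasis of $\mbf\nu$ then lies in $V(\mbf\eta)$ (since $\mbf\eta$ is a multiple of $\mathbf I$), its eigenvalues lie in $[-b,a]$, and they sum to $\operatorname{tr}\mbf\nu=a-b$. For the converse inclusion, given $\mbf\nu$ of the form prescribed by \eqref{sub_g_ab}, I would reconstruct $\mathbf P=\sum_{i}p_i\mathbf e_i\otimes\mathbf e_i$ and $\mathbf Q=\sum_{i}q_i\mathbf e_i\otimes\mathbf e_i$ in the common eigenbasis of $\mbf\nu$ by solving the small linear system $ap_i-bq_i=\nu_i$, $\sum_i p_i=\sum_i q_i=1$, $p_i,q_i\geq 0$; its feasibility reduces to $\sum_{\nu_i<0}|\nu_i|\leq b$, which follows from $\sum_i\nu_i=a-b$ combined with $\nu_1\leq a$ (and symmetrically from $\nu_3\geq -b$ for the positive part). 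This is the step I expect to require the most care, since it is where the non-commutativity of $\mathbf P$ and $\mathbf Q$ gets absorbed into the otherwise tidy ordering bounds appearing in \eqref{sub_g_ab}.
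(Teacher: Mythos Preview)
Your argument is correct, but it follows a genuinely different route from the paper's own proof. The paper exploits the positive homogeneity of $g$ to reduce the defining inequality of $\partial g(\mbf\eta)$ to the pair of conditions $g(\mbf\eta)=\mbf\nu:\mbf\eta$ and $g(\mbf\tau)\geq\mbf\nu:\mbf\tau$ for all $\mbf\tau$. It then extracts the constraints $\sum_i\nu_i=a-b$, $\nu_1\leq a$, $\nu_3\geq -b$ by testing the second condition against $\mbf\tau=\pm\mbf I$, $\mathbf f_1\otimes\mathbf f_1$, $-\mathbf f_3\otimes\mathbf f_3$ (with $(\mathbf f_i)$ an eigenbasis of $\mbf\nu$), and proves their sufficiency by an Abel-summation/rearrangement chain yielding $\mbf\nu:\mbf\tau\leq\sum_i\nu_i\,\omega_i(\mbf\tau)\leq a\omega_1(\mbf\tau)-b\omega_3(\mbf\tau)$. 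Finally, the equality $g(\mbf\eta)=\mbf\nu:\mbf\eta$ forces the common eigenbasis and the two complementarity conditions. No case splitting on the multiplicities of $\mbf\eta$ is needed.

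Your approach, by contrast, invokes the Moreau--Rockafellar sum rule and a Danskin-type identification of $\partial\omega_1$ and $\partial(-\omega_3)$, which makes the additive structure $\mbf\nu=a\mathbf P-b\mathbf Q$ transparent and connects cleanly to standard convex-analysis machinery. The price is the four-way case analysis on eigenvalue multiplicities, with the fully degenerate case requiring the small feasibility argument you sketched (which is indeed sound: the bound $\sum_{\nu_i<0}|\nu_i|\leq b$ follows from $\sum_i\nu_i=a-b$ and $\nu_1\leq a$ exactly as you indicate). The paper's proof is more elementary and self-contained---no black-box results beyond the definition of the subdifferential---and its rearrangement inequality handles all multiplicity patterns uniformly; your proof is shorter to state once the standard tools are granted, and arguably exposes \emph{why} the description \eqref{sub_g_ab} has the shape it does.
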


\begin{proof}
Since $g(\mbf 0)=0$ and $g(2\mbf\eta)=2g(\mbf\eta)$ the standard definition of $\partial g(\mbf\eta)$ is equivalent to:
\begin{equation}
\partial g(\mbf\eta)=\{\mbf\nu\in\mathbb R^{3\times 3}_{sym}\ |\; g(\mbf\eta)=\mbf\nu:\mbf\eta;\; g(\mbf\tau)\geq\mbf\nu:\mbf\tau\;\;\forall \mbf\tau\in\mathbb R^{3\times 3}_{sym}\}.
\label{sub_def}
\end{equation}
First, we derive necessary and sufficient conditions on $\mbf\nu\in\mathbb R^{3\times 3}_{sym}$ ensuring
\begin{equation}
g(\mbf\tau)\geq\mbf\nu:\mbf\tau\quad \forall \mbf\tau\in\mathbb R^{3\times 3}_{sym}.
\label{ineq_g_T}
\end{equation}
To this end, consider the following spectral decomposition of $\mbf\nu$:
\begin{equation}
\mbf\nu=\sum_{i=1}^3\nu_i\mathbf f_i\otimes\mathbf f_i,\quad \nu_1\geq\nu_2\geq\nu_3,\quad (\mathbf f_1,\mathbf f_2,\mathbf f_3)\in V(\mbf\nu).
\label{cond_sub_0}
\end{equation}
Choose $\mbf\tau=\pm\mbf I$, where $\mbf I$ is the unit tensor in $\mathbb R^{3\times 3}_{sym}$. Then from (\ref{ineq_g_T}), (\ref{cond_sub_0}) we have:
\begin{equation}
\nu_1+\nu_2+\nu_3=a-b.
\label{cond_sub_1}
\end{equation}
Choose $\mbf\tau=\mathbf f_1\otimes\mathbf f_1$ and $\mbf\tau=-\mathbf f_3\otimes\mathbf f_3$.  Then from (\ref{ineq_g_T}), (\ref{cond_sub_0}) we derive, respectively:
\begin{equation}
\nu_1\leq a,\quad \nu_3\geq -b.
\label{cond_sub_2}
\end{equation}
Let $\mbf\tau\in\mathbb R^{3\times 3}_{sym}$ be arbitrarily chosen and denote $\tau_i:=\mbf\tau:(\mathbf f_i\otimes\mathbf f_i)$, $i=1,2,3$, $(\mathbf f_1,\mathbf f_2,\mathbf f_3)\in V(\mbf\nu)$. Then,
\begin{equation}
\tau_1+\tau_2+\tau_3=\mbf\tau:\mbf I=\omega_1(\mbf\tau)+\omega_2(\mbf\tau)+\omega_3(\mbf\tau),\qquad \omega_1(\mbf\tau)\geq\tau_i\geq\omega_3(\mbf\tau),\quad\forall i=1,2,3,
\label{tau_i_prop}
\end{equation}
follow from $\mbf I=\sum_{i=1}^3\mathbf f_i\otimes\mathbf f_i$ and (\ref{eig_maxmin}), respectively. Consequently,
\begin{eqnarray}
\mbf\nu:\mbf\tau&=&\sum_{i=1}^3\nu_i\tau_i=\tau_1(\nu_1- \nu_2)+(\tau_1+\tau_2)(\nu_2-\nu_3)+(\tau_1+\tau_2+\tau_3)\nu_3\nonumber\\
&\stackrel{(\ref{tau_i_prop})}{=}&\tau_1(\nu_1- \nu_2)+(\mbf\tau:\mbf I-\tau_3)(\nu_2-\nu_3)+\nu_3\mbf\tau:\mbf I\nonumber\\
&\stackrel{(\ref{tau_i_prop})}{\leq}&\omega_1(\mbf\eta)(\nu_1- \nu_2)+[\mbf\tau:\mbf I-\omega_3(\mbf\eta)](\nu_2-\nu_3)+\nu_3\mbf\tau:\mbf I\nonumber=\sum_{i=1}^3\nu_i\omega_i(\mbf\eta)\\
&=&\nu_1[\omega_1(\mbf\eta)-\omega_2(\mbf\eta)]+(\nu_1+\nu_2)[\omega_2(\mbf\eta)-\omega_3(\mbf\eta)]+(\nu_1+\nu_2+\nu_3)\omega_3(\mbf\eta)\nonumber\\
&\stackrel{(\ref{cond_sub_1})}{=}&\nu_1[\omega_1(\mbf\eta)-\omega_2(\mbf\eta)]+(a-b-\nu_3)[\omega_2(\mbf\eta)-\omega_3(\mbf\eta)]+(a-b)\omega_3(\mbf\eta)\nonumber\\
&\stackrel{(\ref{cond_sub_2})}{\leq}&a[\omega_1(\mbf\eta)-\omega_2(\mbf\eta)]+a[\omega_2(\mbf\eta)-\omega_3(\mbf\eta)]+(a-b)\omega_3(\mbf\eta)\nonumber\\
&=&a\omega_1(\mbf\tau)-b\omega_3(\mbf\tau)=g(\mbf\tau)\quad\forall \mbf\tau\in\mathbb R^{3\times 3}_{sym}.
\label{ineq_T}
\end{eqnarray}
Thus the conditions (\ref{cond_sub_0})-(\ref{cond_sub_2}) are necessary and sufficient for (\ref{ineq_g_T}).

Secondly, assume that $\mbf\nu$ belongs to $\partial g(\mbf\eta)$. Then (\ref{cond_sub_0})-(\ref{cond_sub_2}) hold. Since  $g(\mbf\eta)\stackrel{(\ref{sub_def})}{=}\mbf\nu:\mbf\eta$, the equalities must hold within the derivation of (\ref{ineq_T}) for $\mbf\tau=\mbf\eta$, i.e., we have:
\begin{equation}
(\tau_1-\omega_1(\mbf\eta))(\nu_1- \nu_2)=0,\quad (\tau_3-\omega_3(\mbf\eta))(\nu_2- \nu_3)=0,
\label{cond_sub_4a}
\end{equation}
\begin{equation}
(\nu_1-a)[\omega_1(\mbf\eta)-\omega_2(\mbf\eta)]=0,\quad (\nu_3+b)[\omega_2(\mbf\eta)-\omega_3(\mbf\eta)]=0.
\label{cond_sub_7}
\end{equation}
It is easy to see that the equalities in (\ref{cond_sub_4a}) imply:
\begin{equation}
\exists (\mathbf e_1, \mathbf e_2, \mathbf e_3)\in V(\mbf\eta):\quad \mbf\nu=\sum_{i=1}^3\nu_i\mathbf e_i\otimes\mathbf e_i.
\label{cond_sub_8}
\end{equation}

We have proven that for any element $\mbf\nu\in\partial g(\mbf\eta)$ the conditions (\ref{cond_sub_0})-(\ref{cond_sub_2}), (\ref{cond_sub_7}) and (\ref{cond_sub_8}) hold. Therefore,
\begin{eqnarray}
\partial g(\mbf\eta)&\subset&\left\{ \mbf\nu=\sum_{i=1}^3\nu_i\mathbf e_i\otimes\mathbf e_i\in\mathbb R^{3\times 3}_{sym}\ |\; (\mathbf e_1, \mathbf e_2, \mathbf e_3)\in V(\mbf\eta); \; a\geq\nu_1\geq\nu_2\geq\nu_3\geq -b;\right.\nonumber\\
&&\quad \left.\sum_{i=1}^3\nu_i=a-b;\; (\nu_1-a)[\omega_1(\mbf\eta)-\omega_2(\mbf\eta)]=0;\; (\nu_3+b)[\omega_2(\mbf\eta)-\omega_3(\mbf\eta)]=0 \right\}.\qquad
\label{sub_g_ab2}
\end{eqnarray}
Conversely, one can easily check that any element from the set on the right hand side in (\ref{sub_g_ab2}) belongs to $\partial g(\mbf\eta)$ using (\ref{sub_def}) and (\ref{ineq_T}).
\end{proof}

\begin{remark}
\emph{One can easily specify the eigenvalues $\nu_1$, $\nu_2$ and $\nu_3$ in (\ref{sub_g_ab}) depending on a number of distinct eigenvalues of  $\mbf\eta$. If $\eta_1>\eta_2>\eta_3$ then $\nu_1=a$, $\nu_2=0$ and $\nu_3=-b$. If $\eta_1=\eta_2>\eta_3$ then $a\geq\nu_1\geq\nu_2\geq0$, $\nu_1+\nu_2=a$, and $\nu_3=-b$. If $\eta_1>\eta_2=\eta_3$ then $\nu_1=a$ and $0\geq\nu_2\geq\nu_3\geq-b$, $\nu_2+\nu_3=-b$. }
\label{remark_subdif2}
\end{remark}

\subsection{First and second derivatives of eigenvalue functions}
\label{subsec_eigenprojection}

It is well-known that differentiability of eigenvalue functions depends on multiplicity of the eigenvalues. For example, the function $g$ is differentiable at $\mbf\eta$ with $\eta_1>\eta_2>\eta_3$ as follows from  Remark \ref{remark_subdif2}. Following \cite{NPO08, CaHo86}, we derive the first and second Fr\'echet derivatives of the eigenvalue functions using eigenprojections.  The derivative of function $F:\mathbb R^{3\times 3}_{sym}\rightarrow\mathbb R$ at $\mbf\eta$ is denoted as $\mathcal D F(\mbf\eta)$. Analogous notation, $\mathcal D \mbf F(\mbf\eta)$, is also used for tensor-valued function  $\mbf F:\mathbb R^{3\times 3}_{sym}\rightarrow\mathbb R^{3\times 3}_{sym}$. Further, it is worth mentioning that some derivatives introduced below cannot be extended on $\mathbb R^{3\times 3}$.

First, assume three distinct eigenvalues of $\mbf\eta$, i.e., $\eta_1> \eta_2> \eta_3$. Then one can introduce the eigenprojections $\mbf E_i:=\mbf E_i(\mbf\eta)$, $i=1,2,3$, of $\mbf\eta$ as follows:
\begin{equation}
\mbf E_i=\mathbf e_i\otimes\mathbf e_i=\frac{(\mbf\eta-\eta_j\mbf I)(\mbf\eta-\eta_k\mbf I)}{(\eta_i-\eta_j)(\eta_i-\eta_k)},\quad i\neq j\neq k\neq i,\;\; i=1,2,3.
\label{eigenprojection_case1}
\end{equation}
It holds:
\begin{equation}
\mbf\eta=\sum_{i=1}^3\eta_i\mbf E_i,\quad \sum_{i=1}^3\mbf E_i=\mbf I,
\label{eigen_prop}
\end{equation}
\begin{equation}
\mathcal D\omega_i(\mbf\eta)=\mbf E_i(\mbf\eta),\quad i=1,2,3,
\label{der_eigen_case1}
\end{equation}
\begin{equation}
\mathcal D\mbf E_i(\mbf\eta)=\frac{\mathcal D(\mbf\eta^2)-(\eta_j+\eta_k)\mathbb I-(2\eta_i-\eta_j-\eta_k)\mbf E_i\otimes\mbf E_i-(\eta_j-\eta_k)[\mbf E_j\otimes\mbf E_j-\mbf E_k\otimes\mbf E_k]}{(\eta_i-\eta_j)(\eta_i-\eta_k)},
\label{E_deriv_case1}
\end{equation}
for any $i=1,2,3,$ $i\neq j\neq k\neq i$, where the components of the fourth order tensors $\mathcal D(\mbf\eta^2)$ and $\mathbb I$ satisfy
$[\mathcal D(\mbf\eta^2)]_{ijkl}=\delta_{ik}[\mbf\eta]_{lj}+\delta_{jl}[\mbf\eta]_{ik}$ and $[\mathbb I]_{ijkl}=\delta_{ik}\delta_{jl}$, respectively\footnote{In \cite[Appendix A]{NPO08}, instead of $\mathcal D(\mbf\eta^2)$ and $\mathbb I$, their symmetric parts are introduced. For example, instead of $\mathbb I$, the tensor $\mathbb I_S$ with the components $[\mathbb I_S]_{ijkl}=\frac{1}{2}(\delta_{ik}\delta_{jl}+\delta_{il}\delta_{jk})$ is considered. One can easily check that $\mathbb I:\mbf\eta=\mathbb I_S:\mbf\eta=\mbf\eta$ for any $\mbf\eta\in\mathbb R^{3\times 3}_{sym}$. A similar identity also holds for $\mathcal D(\mbf\eta^2)$.}.
We use the notation $\mathbb E_i(\mbf\eta):=\mathcal D\mbf E_i(\mbf\eta)$,  $i=1,2,3$. 

Now, assume $\eta_1\geq\eta_2>\eta_3$. In this more general case, one can introduce the derivatives of $\omega_3$ and $\omega_{12}:=\omega_1+\omega_2$. From (\ref{eigenprojection_case1}), it is readily seen that the function $\mbf E_3$ can be continuously extended for $\mbf\eta$ satisfying $\eta_1=\eta_2$ unlike $\mbf E_1$ and $\mbf E_2$. Hence and from (\ref{eigen_prop}), (\ref{der_eigen_case1}), one can write:
\begin{equation}
\mathcal D\omega_3(\mbf\eta)=\mbf E_3(\mbf\eta),\quad \mathcal D\omega_{12}(\mbf\eta)=\mbf I-\mbf E_{3}(\mbf\eta)=: \mbf E_{12}(\mbf\eta).
\label{der_eigen_case2}
\end{equation}
To continuously extend the function $\mathbb E_3(\mbf\eta):=\mathcal D\mbf E_3(\mbf\eta)=-\mathcal D\mbf E_{12}(\mbf\eta)$, we use the equality
$$(\eta_1-\eta_2)(\mbf E_1\otimes\mbf E_1-\mbf E_2\otimes\mbf E_2)=(\mbf\eta-\eta_3\mbf E_3)\otimes\mbf E_{12}+\mbf E_{12}\otimes(\mbf\eta-\eta_3\mbf E_3)-(\eta_1+\eta_2)\mbf E_{12}\otimes\mbf E_{12}$$
and substitute it into (\ref{E_deriv_case1}) for $i=3$. We obtain
\begin{eqnarray}
\mathbb E_3(\mbf\eta)&=&\frac{\mathcal D(\mbf\eta^2)-(\eta_1+\eta_2)\mathbb I-[\mbf\eta\otimes\mbf E_{12}+\mbf E_{12}\otimes\mbf\eta]+(\eta_1+\eta_2)\mbf E_{12}\otimes\mbf E_{12}}{(\eta_3-\eta_1)(\eta_3-\eta_2)}+\nonumber\\[3pt]
&&+\frac{(\eta_1+\eta_2-2\eta_3)\mbf E_3\otimes\mbf E_3+\eta_3[\mbf E_{12}\otimes\mbf E_3+\mbf E_3\otimes\mbf E_{12}]}{(\eta_3-\eta_1)(\eta_3-\eta_2)}.
\label{E3_deriv_case2}
\end{eqnarray}
Clearly, (\ref{E3_deriv_case2}) is well-defined also for $\eta_1=\eta_2$. Notice that if $\eta_1=\eta_2>\eta_3$ then $\mbf \eta$ has only two eigenprojections: $\mbf E_{12}$ and $\mbf E_3$, and $\mbf\eta=\eta_1\mbf E_{12}+\eta_3\mbf E_{3}$. Conversely, if $\eta_1>\eta_2>\eta_3$, then $\mbf E_{12}=\mbf E_1+\mbf E_2$.

If $\eta_1> \eta_2\geq \eta_3$ then one can introduce the derivatives of the functions $\omega_1$, $\omega_{23}:=\omega_2+\omega_3$. Similarly as in the previous case, it holds:
\begin{equation}
\mathcal D\omega_1(\mbf\eta)=\mbf E_1(\mbf\eta),\quad \mathcal D\omega_{23}(\mbf\eta)=\mbf I-\mbf E_{1}(\mbf\eta)=: \mbf E_{23}(\mbf\eta),
\label{der_eigen_case3}
\end{equation}
\begin{eqnarray}
\mathbb E_1(\mbf\eta)=\mathcal D\mbf E_1(\mbf\eta)&=&\frac{\mathcal D(\mbf\eta^2)-(\eta_2+\eta_3)\mathbb I-[\mbf\eta\otimes\mbf E_{23}+\mbf E_{23}\otimes\mbf\eta]+(\eta_2+\eta_3)\mbf E_{23}\otimes\mbf E_{23}}{(\eta_1-\eta_2)(\eta_1-\eta_2)}+\nonumber\\[3pt]
&&+\frac{(\eta_2+\eta_3-2\eta_1)\mbf E_1\otimes\mbf E_1+\eta_1[\mbf E_{23}\otimes\mbf E_1+\mbf E_1\otimes\mbf E_{23}]}{(\eta_1-\eta_2)(\eta_1-\eta_3)}.
\label{E3_deriv_case3}
\end{eqnarray}
Notice that if $\eta_1>\eta_2=\eta_3$ then $\mbf \eta$ has only two eigenprojections: $\mbf E_{1}$ and $\mbf E_{23}$, and $\mbf\eta=\eta_1\mbf E_1+\eta_3\mbf E_{23}$. Conversely, if $\eta_1>\eta_2>\eta_3$, then $\mbf E_{23}=\mbf E_2+\mbf E_3$.

In the general case $\eta_1\geq \eta_2\geq \eta_3$, it holds that $\eta_1+\eta_2+\eta_3=\mbf\eta:\mbf I$ and thus
\begin{equation}
\mathcal D[\omega_1+\omega_2+\omega_3](\mbf\eta)=\mbf I.
\label{spectrum_case4}
\end{equation}
Notice that if $\eta_1=\eta_2=\eta_3$ then $\mbf \eta=\eta_1\mbf I$ has only one eigenprojection: $\mbf I$.

\begin{remark}
\emph{The mentioned derivatives can be found in simpler forms when plane strain assumptions are considered, see Appendix A of this paper.}
\end{remark}

\section{The Mohr-Coulomb constitutive problems}
\label{sec_model}

In this section, we introduce the Mohr-Coulomb constitutive initial value problem and its implicit Euler discretization. We use the model proposed in \cite{NPO08} containing the Mohr-Coulomb yield criterion, the  nonassociative plastic flow rule, and the nonlinear isotropic hardening.

\subsection{The initial value constitutive problem}

The initial value constitutive problem reads as:

\medskip\noindent
\textit{Given the history of the strain tensor $\mbf\varepsilon=\mbf\varepsilon(t)$, $t\in[0, t_{\max}]$, and the initial values
$\mbf{\varepsilon}^p(0)=\mbf{\varepsilon}^p_0, \; \bar\varepsilon^p(0)=\bar\varepsilon^p_0.$
Find $(\mbf\sigma(t),\mbf{\varepsilon}^p(t), \bar\varepsilon^p(t))$ such that 
\begin{equation}
\left.
\begin{array}{c}
\mbf\sigma=\mathbb D_e:(\mbf{\varepsilon}-\mbf{\varepsilon}^p),\;\; \kappa=H(\bar\varepsilon^p),\\[1mm]
\dot{\mbf{\varepsilon}^p}\in\dot\lambda\partial g(\mbf\sigma),\;\; \dot{\bar{\varepsilon}}^p=-\dot\lambda\frac{\partial f(\mbf\sigma,\kappa)}{\partial \kappa},\\[1mm]
\dot\lambda\geq0,\;\; f(\mbf\sigma,\kappa)\leq0,\;\; \dot\lambda f(\mbf\sigma,\kappa)=0.
\end{array}
\right\}
\label{CIVP_MC}
\end{equation}
hold for each instant $t\in[0,t_{\max}]$}.

\medskip\noindent
Here, $\mbf\sigma,\mbf{\varepsilon}^p, \bar\varepsilon^p,\lambda$ denote the Cauchy stress tensor, the plastic strain, the hardening variable, and the plastic multiplier, respectively. The dot symbol means the pseudo-time derivative of a quantity. The functions $f$ and $g$ represent the yield function and the plastic potential for the Mohr-Coulomb model, respectively. They are defined as follows:
\begin{eqnarray}
f(\mbf\sigma,\kappa)&=&(1+\sin\phi)\omega_1(\mbf\sigma)-(1-\sin\phi)\omega_3(\mbf\sigma)-2(c_0+\kappa)\cos\phi,\label{yield_function_simpl}\\
g(\mbf\sigma)&=&(1+\sin\psi)\omega_1(\mbf\sigma)-(1-\sin\psi)\omega_3(\mbf\sigma),
\label{potential_function}
\end{eqnarray}
where $\omega_1$ and $\omega_3$ are the maximal and minimal eigenvalue functions introduced in Section \ref{sec_spectrum}, and
he material parameters $c_0>0$, $\phi,\psi\in(0,\pi/2)$ represent the initial cohesion, the friction angle, and the dilatancy angle, respectively. Notice that $f,g$ are convex functions with respect to the stress variable. Recall that the function $g$ was already introduced in Section \ref{subsec_subdif} for the choice
\begin{equation}
a:=1+\sin\psi,\quad b:=1-\sin\psi
\label{def_ab}
\end{equation}
and thus one can define $\partial g(\mbf\sigma)$ using Lemma \ref{lem_subdif}.
Clearly, $\partial f(\mbf\sigma,\kappa)/\partial \kappa=-2\cos\phi$.

Further, the fourth order tensor $\mathbb D_e$ represents linear isotropic elastic law:
\begin{equation}
\mbf\sigma=\mathbb D_e:\mbf{\varepsilon}^e=\frac{1}{3}(3K-2G)(\mbf I:\mbf{\varepsilon}^e)\mbf I+2G\mbf{\varepsilon}^e,\quad \mathbb D_e=\frac{1}{3}(3K-2G)\mbf I\otimes\mbf I+2G\mathbb I,
\label{elastic_law}
\end{equation}
where $\mbf\varepsilon^e=\mbf\varepsilon-\mbf\varepsilon^p$ is the elastic part of the strain tensor and $K,G>0$ denotes the bulk, and shear moduli, respectively. 

Finally, we let the function $H$ representing the non-linear isotropic hardening in an abstract form and assume that
it is a nondecreasing, continuous, and piecewise smooth function satisfying $H(0)=0$. 

It is worth mentioning that the value $t_{max}$ need not be always known, see Section \ref{sec_realization}.

\subsection{The discretized constitutive problem}

Let $0=t_0<t_1<\ldots<t_k<\ldots<t_N=t_{\max}$ be a partition of the interval $[0,t_{\max}]$ and denote $\mbf{\sigma}_k:=\mbf{\sigma}(t_k)$, $\mbf\varepsilon_k:=\mbf\varepsilon(t_k)$, $\mbf\varepsilon^p_k:=\mbf\varepsilon^p(t_k)$, $\bar\varepsilon^p_k:=\bar\varepsilon^p(t_k)$, $\bar{\varepsilon}^{p,tr}_k:=\bar{\varepsilon}^p(t_{k-1})$, $\mbf\varepsilon^{tr}_k:=\mbf\varepsilon(t_{k})-\mbf{\varepsilon}^p(t_{k-1})$, and $\mbf{\sigma}^{tr}_k:=\mathbb D_e:\mbf\varepsilon^{tr}_k$. Here, the superscript $tr$ is the standard notation for the so-called trial variables (see, e.g., \cite{NPO08}) which are known. If it is clear that the step $k$ is fixed then we will omit the subscript $k$ and write $\mbf{\sigma}$, $\mbf\varepsilon$, $\mbf\varepsilon^p$, $\bar\varepsilon^p$, $\bar{\varepsilon}^{p,tr}$, $\mbf\varepsilon^{tr}$, and $\mbf{\sigma}^{tr}$ to simplify the notation. The $k$-th step of the incremental constitutive problem discretized by the implicit Euler method reads as: 

\medskip\noindent
{\it Given $\mbf{\sigma}^{tr}$ and $\bar{\varepsilon}^{p,tr}$. Find $\mbf{\sigma}$,  $\bar{\varepsilon}^p$, and $\triangle\lambda$ satisfying:}
\begin{equation}
\left.
\begin{array}{c}
\mbf{\sigma}=\mbf{\sigma}^{tr}-\triangle\lambda\mathbb D_e:\mbf\nu,\quad \mbf\nu\in\partial g(\mbf\sigma),\\[1mm]
\bar{\varepsilon}^p=\bar{\varepsilon}^{p,tr}+\triangle\lambda (2\cos\phi),\\[1mm]
\triangle\lambda\geq0,\quad f(\mbf{\sigma},H(\bar\varepsilon^p))\leq0,\quad \triangle\lambda f(\mbf{\sigma},H(\bar\varepsilon^p))=0.
\end{array}
\right\}
\label{k_step_problem}
\end{equation}
Unlike problem (\ref{CIVP_MC}), the unknown $\mbf\varepsilon^p$ is not introduced in (\ref{k_step_problem}). It can be simply computed from the formula $\mbf{\varepsilon}^p(t_k)=\mbf{\varepsilon}(t_k)-\mathbb D_e^{-1}:\mbf{\sigma}(t_k)$ and used as the input parameter for the next step.

\section{Solution of the discretized constitutive problem}
\label{sec_time_discret}

The aim of this section is to derive an improved solution scheme to problem (\ref{k_step_problem}).  The solution scheme builds on the standard {\it elastic predictor - plastic corrector method} and its improvement is based on the form of $\partial g(\mbf\sigma)$ introduced in Lemma \ref{lem_subdif}. Within the {\it elastic prediction}, we assume $\triangle\lambda=0$. Then, it is readily seen that the triple
\begin{equation}
\mbf{\sigma}=\mbf{\sigma}^{tr},\quad \bar{\varepsilon}^p=\bar{\varepsilon}^{p,tr}, \quad \triangle\lambda=0
\label{elast_solution}
\end{equation}
is the solution to (\ref{k_step_problem}) under the condition
\begin{equation}
f(\mbf{\sigma}^{tr},H(\bar{\varepsilon}^{p,tr}))\leq0.
\label{trial_admissibility}
\end{equation}
The {\it plastic correction} happens when $\triangle\lambda>0$. Then the unknown generalized stress $(\mbf\sigma,H(\bar{\varepsilon}^p))$ lies on the yield surface and thus the corresponding plastic correction problem reads as:
{\it Given $\mbf{\sigma}^{tr}$ and $\bar{\varepsilon}^{p,tr}$. Find $\mbf{\sigma}$,  $\bar{\varepsilon}^p$, and $\triangle\lambda>0$ satisfying:}
\begin{equation}
\left.
\begin{array}{c}
\mbf{\sigma}=\mbf{\sigma}^{tr}-\triangle\lambda\mathbb D_e:\mbf\nu,\quad \mbf\nu\in\partial g(\mbf\sigma),\\[1mm]
\bar{\varepsilon}^p=\bar{\varepsilon}^{p,tr}+\triangle\lambda (2\cos\phi),\\[1mm]
f(\mbf{\sigma},H(\bar\varepsilon^p))=0.
\end{array}
\right\}
\label{correction_problem}
\end{equation}
The solution scheme to problem (\ref{correction_problem}) is usually called the {\it implicit return-mapping scheme}. Since its derivation is technically complicated, we divide the rest of this section into several subsections for easier orientation in the text. In Section \ref{subsec_corrections}, problem (\ref{correction_problem}) is reduced and written in terms of principal stresses. In parallel Sections \ref{subsec_smooth}-\ref{subsec_apex}, we introduce solution schemes for returns to the smooth portion, to the ``left" edge, to the ``right" edge, and to the apex of the pyramidal yield surface, respectively.  In Section \ref{subsec_lambda}, we derive a nonlinear equation for the unknown plastic multiplier. This equation is common for all types of the return and has the unique solution. Hence, we derive: existence and uniqueness of problems (\ref{k_step_problem}) and (\ref{correction_problem}), a priori decision criteria for the return types, and other useful results describing a dependence of the unknown stress tensor on the trial stress tensor.
 
\subsection{Plastic correction problem in terms of principal stresses}
\label{subsec_corrections}

First, we reduce problem (\ref{correction_problem}) using the spectral decomposition of $\mbf \sigma$ (see Section \ref{sec_spectrum}):
\begin{equation}
\mbf\sigma=\sum_{i=1}^3\sigma_i\mathbf e_i\otimes\mathbf e_i,\quad \sigma_1\geq\sigma_2\geq\sigma_3,\quad (\mathbf e_1, \mathbf e_2, \mathbf e_3)\in V(\mbf\sigma), \quad \sigma_i:=\omega_i(\mbf\sigma),\;i=1,2,3.
\label{spectrum_sigma}
\end{equation}
From the definition of $f$ introduced in Section \ref{sec_model}, it is easy to see that the equation (\ref{correction_problem})$_3$ can be written only in terms the principal stresses $\sigma_1,\sigma_2,\sigma_3$ instead of the whole stress tensor $\mbf\sigma$. To re-formulate (\ref{correction_problem})$_1$, we use Lemma \ref{lem_subdif} and (\ref{def_ab}): there exists $(\mathbf e_1, \mathbf e_2, \mathbf e_3)\in V(\mbf\sigma)$ such that $\mbf\nu=\sum_{i=1}^3 \nu_i\mathbf e_i\otimes\mathbf e_i$, where
\begin{equation}
\left.
\begin{array}{c}
1+\sin\psi\geq \nu_1\geq \nu_2\geq \nu_3\geq-1+\sin\psi, \quad \nu_1+ \nu_2+ \nu_3=2\sin\psi,\\[1 mm]
( \nu_1-1-\sin\psi)(\sigma_1-\sigma_2)=0,\quad ( \nu_3+1-\sin\psi)(\sigma_2-\sigma_3)=0.
\end{array}
\right\}
\label{prop_n_i}
\end{equation}
Since $\mbf I=\sum_{i=1}^3\mathbf e_i\otimes\mathbf e_i$, (\ref{elastic_law}) implies
\begin{equation}
\mathbb D_e:\mbf\nu=\sum_{i=1}^3 \left[\frac{2}{3}(3K-2G)\sin\psi+2G\nu_i\right]\mathbf e_i\otimes\mathbf e_i.
\label{Den_spectrum}
\end{equation}
Then one can substitute (\ref{spectrum_sigma}) and (\ref{Den_spectrum}) to (\ref{correction_problem})$_1$:
\begin{equation}
\mbf{\sigma}^{tr}=\mbf\sigma+\triangle\lambda\mathbb D_e:\mbf\nu=\sum_{i=1}^3\sigma_i^{tr}\mathbf e_i\otimes\mathbf e_i,\;\;\mbox{where}\quad \sigma_i^{tr}=\sigma_i+\triangle\lambda\left[\frac{2}{3}(3K-2G)\sin\psi+2G\nu_i\right].
\label{flow_eigen}
\end{equation}
Notice that (\ref{flow_eigen})$_1$ defines the spectral decomposition of $\mbf\sigma^{tr}$.
Since $\sigma_1\geq\sigma_2\geq\sigma_3$ and $\nu_1\geq\nu_2\geq\nu_3$, we have:
\begin{itemize}
\item[$(i)$] $\sigma_1^{tr}\geq\sigma_2^{tr}\geq\sigma_3^{tr}$; 
\item[$(ii)$] if $\sigma_i^{tr}=\sigma_j^{tr}$ then $\sigma_i=\sigma_j$, $\nu_i=\nu_j$.
\end{itemize}
From $(i)$, it follows that the eigenvalues $\sigma_1^{tr},\sigma_2^{tr},\sigma_3^{tr}$ are ordered and thus uniquely determined using the eigenvalue functions: $\sigma_i^{tr}=\omega_i(\mbf\sigma^{tr})$, $i=1,2,3$. From $(ii)$, we conclude that $\mbf\sigma=\sum_{i=1}^3\sigma_i\mathbf e_i^{tr}\otimes\mathbf e_i^{tr}$, $\mbf\nu=\sum_{i=1}^3\nu_i\mathbf e_i^{tr}\otimes\mathbf e_i^{tr}$ for any $(\mathbf e_1^{tr},\mathbf e_2^{tr},\mathbf e_3^{tr})\in V(\mbf\sigma^{tr})$. The following lemma summarizes the proven results.

\begin{lem}
Let $(\mbf{\sigma}, \bar{\varepsilon}^p,\triangle\lambda)$, $\triangle\lambda>0$ be a solution to (\ref{correction_problem}) for given $\mbf{\sigma}^{tr}$ and $\bar{\varepsilon}^{p,tr}$. Let $\sigma_i$, $\sigma_i^{tr}$, $i=1,2,3$, be the ordered eigenvalues of $\mbf{\sigma}$ and $\mbf{\sigma}^{tr}$, respectively. Then $(\sigma_1,\sigma_2,\sigma_3, \bar{\varepsilon}^p,\triangle\lambda)$ is a solution to:
\begin{equation}
\left.
\begin{array}{c}
\sigma_i=\sigma_i^{tr}-\triangle\lambda\left[\frac{2}{3}(3K-2G)\sin\psi+2G\nu_i\right],\quad i=1,2,3,\\[1mm]
\bar{\varepsilon}^p=\bar{\varepsilon}^{p,tr}+\triangle\lambda (2\cos\phi),\\[1mm]
(1+\sin\phi)\sigma_1-(1-\sin\phi)\sigma_3-2(c_0+H(\bar\varepsilon^p))\cos\phi=0,
\end{array}
\right\}
\label{correction_problem2}
\end{equation}
where $\nu_1,\nu_2,\nu_3$ satisfy (\ref{prop_n_i})
Conversely, if $(\sigma_1,\sigma_2,\sigma_3, \bar{\varepsilon}^p,\triangle\lambda)$, $\triangle\lambda>0$ is a solution to (\ref{correction_problem2}) then $(\mbf{\sigma}, \bar{\varepsilon}^p,\triangle\lambda)$ solves (\ref{correction_problem}), where $\mbf\sigma=\sum_{i=1}^3\sigma_i\mathbf e_i^{tr}\otimes\mathbf e_i^{tr}$, $(\mathbf e_1^{tr},\mathbf e_2^{tr},\mathbf e_3^{tr})\in V(\mbf\sigma^{tr})$.
\label{lem_reduction1}
\end{lem}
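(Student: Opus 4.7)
The plan is to show that the tensorial plastic correction problem (\ref{correction_problem}) and the eigenvalue system (\ref{correction_problem2}) are equivalent via the common spectral decomposition, essentially assembling the remarks that precede the lemma into a clean two-sided argument. The only real work is to keep careful track of how the eigenframes of $\mbf\sigma$, $\mbf\nu$, and $\mbf\sigma^{tr}$ are related; Lemma \ref{lem_subdif} will be the main workhorse.

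For the forward direction I would start from an ordered spectral decomposition $\mbf\sigma=\sum_i \sigma_i\mathbf e_i\otimes\mathbf e_i$ and apply Lemma \ref{lem_subdif} (with parameters (\ref{def_ab})) to write any $\mbf\nu\in\partial g(\mbf\sigma)$ coaxially as $\sum_i \nu_i\mathbf e_i\otimes\mathbf e_i$ with scalars obeying (\ref{prop_n_i}). Substituting this form into the flow rule (\ref{correction_problem})$_1$ and using the closed form of $\mathbb D_e$ in (\ref{elastic_law}) yields the coaxial representation of $\mbf\sigma^{tr}$ whose coefficients are precisely the $\sigma_i^{tr}$ appearing in (\ref{correction_problem2})$_1$. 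The monotonicity of both $\sigma_i$ and $\nu_i$ (the latter supplied by (\ref{prop_n_i})) forces these coefficients to be nonincreasing, so they must coincide with $\omega_i(\mbf\sigma^{tr})$; the yield equation and the hardening update then rewrite directly in eigenvalue form, giving (\ref{correction_problem2}).

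For the converse I would start from a solution of (\ref{correction_problem2}) together with multipliers $\nu_i$ satisfying (\ref{prop_n_i}), pick an arbitrary basis $(\mathbf e_1^{tr},\mathbf e_2^{tr},\mathbf e_3^{tr})\in V(\mbf\sigma^{tr})$, and define
\[
\mbf\sigma:=\sum_{i=1}^3 \sigma_i\, \mathbf e_i^{tr}\otimes \mathbf e_i^{tr},\qquad \mbf\nu:=\sum_{i=1}^3 \nu_i\, \mathbf e_i^{tr}\otimes \mathbf e_i^{tr}.
\]
By construction $(\mathbf e_1^{tr},\mathbf e_2^{tr},\mathbf e_3^{tr})\in V(\mbf\sigma)$, Lemma \ref{lem_subdif} gives $\mbf\nu\in\partial g(\mbf\sigma)$, and the eigenvalue equations of (\ref{correction_problem2}) repackage into the tensorial system (\ref{correction_problem}) by reversing the computation that led to (\ref{flow_eigen}).

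The step I expect to be delicate is the well-posedness of this converse construction when $\mbf\sigma^{tr}$ has a repeated eigenvalue, because the basis $(\mathbf e_i^{tr})$ is then only defined up to a rotation inside the degenerate eigenspace, and one must verify that the tensors $\mbf\sigma$ and $\mbf\nu$ above do not depend on this rotation. Property (ii) from the paragraph just before the lemma handles exactly this: whenever $\sigma_i^{tr}=\sigma_j^{tr}$, the constraints in (\ref{correction_problem2})$_1$ together with (\ref{prop_n_i}) force $\sigma_i=\sigma_j$ and $\nu_i=\nu_j$, so the offending pair of rank-one projectors collapses into the intrinsic eigenprojection of the shared eigenspace and both constructions become basis-independent. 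Once this is settled, both implications reduce to the algebraic identities already displayed in the text preceding the lemma.
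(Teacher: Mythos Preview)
Your proposal is correct and matches the paper's approach exactly: the argument you outline is precisely the derivation given in Section~\ref{subsec_corrections} (equations (\ref{spectrum_sigma})--(\ref{flow_eigen}) together with properties $(i)$ and $(ii)$), which the paper then summarizes in the lemma without a separate formal proof. The basis-independence point you flag for the converse is the content of property~$(ii)$; the paper places that observation in the forward direction (to show the tensorial solution $\mbf\sigma$ can be rewritten in \emph{any} trial eigenframe), but either placement works and the mathematical content is identical.
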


To be in accordance with problems (\ref{k_step_problem}) and (\ref{correction_problem}), we do not include $\nu_1,\nu_2,\nu_3$ to the list of unknowns. From (\ref{prop_n_i}), it follows that
the values of $\nu_1,\nu_2,\nu_3$ can be specified depending on multiplicity of $\sigma_1,\sigma_2,\sigma_3$, similarly as in Remark \ref{remark_subdif2}. Therefore, we will distinguish below four types of the return on the yield surface: the return to the smooth portion ($\sigma_1>\sigma_2>\sigma_3$), the return to the left edge ($\sigma_1=\sigma_2>\sigma_3$), the return to the right edge ($\sigma_1>\sigma_2=\sigma_3$) and the return to the apex ($\sigma_1=\sigma_2=\sigma_3$). This terminology follows from \cite{NPO08}, another one is used, e.g., in \cite{LR96}. Within the below introduced notation, we will use the subscripts $s$, $l$, $r$, $a$ to distinguish the return type and the superscript ``$tr$" to emphasize a known quantity depending only on the trial variables.

\subsection{The return to the smooth portion}
\label{subsec_smooth}

Assume $\sigma_1>\sigma_2>\sigma_3$. Then $\nu_1=1+\sin\psi$, $\nu_2=0$, $\nu_3=-(1-\sin\psi)$ and (\ref{correction_problem2})$_1$ reads as:
\begin{eqnarray}
\sigma_1&=&\sigma_1^{tr}-\triangle\lambda\left[\frac{2}{3}(3K-2G)\sin\psi+2G(1+\sin\psi)\right],\label{flow11}\\
\sigma_2&=&\sigma_2^{tr}-\triangle\lambda\left[\frac{2}{3}(3K-2G)\sin\psi\right],\label{flow12}\\
\sigma_3&=&\sigma_3^{tr}-\triangle\lambda\left[\frac{2}{3}(3K-2G)\sin\psi-2G(1-\sin\psi)\right].\label{flow13}
\end{eqnarray}
Consequently, one can substitute (\ref{flow11}), (\ref{flow13}), and (\ref{correction_problem2})$_2$ to (\ref{correction_problem2})$_3$. This leads to the equation $q_s^{tr}(\triangle\lambda)=0$, where 
\begin{eqnarray}
q^{tr}_s(\gamma)&=&(1+\sin\phi)\sigma_1^{tr}-(1-\sin\phi)\sigma_3^{tr}-2\left[c_0+H\left(\bar{\varepsilon}^{p,tr}+\gamma (2\cos\phi)\right)\right]\cos\phi\nonumber\\
&&-\gamma\left[\frac{4}{3}(3K-2G)\sin\psi\sin\phi+ 4G(1+\sin\psi\sin\phi)\right]. \label{q1}
\end{eqnarray}
Further, from (\ref{flow11})-(\ref{flow13}), two additional important consequence follow:
\begin{itemize}
\item $\sigma_1^{tr}>\sigma_2^{tr}>\sigma_3^{tr}$,
\item $\triangle\lambda\in C^{tr}_s:=\{\gamma\in(0,+\infty)\ |\; \gamma<\min\{\gamma^{tr}_{s,l},\gamma^{tr}_{s,r}\}\}$, where
\end{itemize}
\begin{equation}
\gamma^{tr}_{s,l}:=\frac{\sigma_1^{tr}-\sigma_2^{tr}}{2G(1+\sin\psi)}\geq0,\quad\gamma^{tr}_{s,r}:=\frac{\sigma_2^{tr}-\sigma_3^{tr}}{2G(1-\sin\psi)}\geq0.
\label{gamma1}
\end{equation}

\subsection{The return to the left edge} 

Assume $\sigma_1=\sigma_2>\sigma_3$. Then $\nu_3=-(1-\sin\psi)$, $\nu_1+\nu_2=1+\sin\psi$, and $1+\sin\psi\geq \nu_1\geq \nu_2\geq0$ implying $\nu_1-\nu_2\leq1+\sin\psi$.  Consequently, (\ref{correction_problem2})$_1$ yields:
\begin{eqnarray}
\frac{1}{2}(\sigma_1+\sigma_2)=\sigma_1&=&\frac{1}{2}(\sigma_1^{tr}+\sigma_2^{tr})-\triangle\lambda\left[\frac{2}{3}(3K-2G)\sin\psi+G(1+\sin\psi)\right],\label{flow212}\\
\sigma_3&=&\sigma_3^{tr}-\triangle\lambda\left[\frac{2}{3}(3K-2G)\sin\psi-2G(1-\sin\psi)\right],\label{flow23}
\end{eqnarray}
and
\begin{equation}
0=\sigma_1-\sigma_2=\sigma_1^{tr}-\sigma_2^{tr}-\triangle\lambda[2G(\nu_1-\nu_2)]\geq\sigma_1^{tr}-\sigma_2^{tr}-\triangle\lambda[2G(1+\sin\psi)].
\label{est_left}
\end{equation}
After substitution (\ref{flow212}), (\ref{flow23}), and (\ref{correction_problem2})$_2$ to (\ref{correction_problem2})$_3$, we arrive at $q_l^{tr}(\triangle\lambda)=0$, where 
\begin{eqnarray}
q^{tr}_l(\gamma)&=&\frac{1}{2}(1+\sin\phi)(\sigma_1^{tr}+\sigma_2^{tr})-(1-\sin\phi)\sigma_3^{tr}-2\left[c_0+H\left(\bar{\varepsilon}^{p,tr}+\gamma (2\cos\phi)\right)\right]\cos\phi-\nonumber\\
&&\gamma\left[\frac{4}{3}(3K-2G)\sin\psi\sin\phi+ G(1+\sin\psi)(1+\sin\phi)+ 2G(1-\sin\psi)(1-\sin\phi)\right].\qquad
\label{q2}
\end{eqnarray}
Further, from (\ref{flow212})-(\ref{q2}), three additional important consequences follow:
\begin{itemize}
\item $\sigma_2^{tr}>\sigma_3^{tr}$,
\item $\sigma_1$, $\sigma_3$, $\triangle\lambda$ depend on $\sigma^{tr}_1$, $\sigma^{tr}_2$ only through $\sigma^{tr}_1+\sigma^{tr}_2$,
\item $\triangle\lambda\in C^{tr}_l:=\{\gamma\in(0,+\infty)\ |\; \gamma^{tr}_{s,l}\leq\gamma<\gamma^{tr}_{l,a}\}$, where
\begin{equation}
\gamma^{tr}_{l,a}=\frac{\sigma_1^{tr}+\sigma_2^{tr}-2\sigma_3^{tr}}{2G(3-\sin\psi)}=\frac{1+\sin\psi}{3-\sin\psi}\gamma^{tr}_{s,l}+\left(1-\frac{1+\sin\psi}{3-\sin\psi}\right)\gamma^{tr}_{s,r}\geq0
\label{gamma2}
\end{equation}
and $\gamma^{tr}_{s,l}$, $\gamma^{tr}_{s,r}$ are the same as in (\ref{gamma1}). Notice that $\gamma^{tr}_{s,l}<\gamma^{tr}_{l,a}<\gamma^{tr}_{s,r}$ in this case.
\end{itemize}

\subsection{The return to the right edge} 

Assume $\sigma_1>\sigma_2=\sigma_3$. 
Then $\nu_1=1+\sin\psi$, $\nu_2+\nu_3=-1+\sin\psi$, and $0\geq \nu_2\geq \nu_3\geq -1+\sin\psi$ implying $\nu_2-\nu_3\leq1-\sin\psi$.  Consequently, (\ref{correction_problem2})$_1$ yields:
\begin{eqnarray}
\sigma_1&=&\sigma_1^{tr}-\triangle\lambda\left[\frac{2}{3}(3K-2G)\sin\psi+2G(1+\sin\psi)\right],\label{flow31}\\
\frac{1}{2}(\sigma_2+\sigma_3)=\sigma_3&=&\frac{1}{2}(\sigma_2^{tr}+\sigma_3^{tr})-\triangle\lambda\left[\frac{2}{3}(3K-2G)\sin\psi- G(1-\sin\psi)\right].\label{flow323}
\end{eqnarray}
and
\begin{equation}
0=\sigma_2-\sigma_3=\sigma_2^{tr}-\sigma_3^{tr}-\triangle\lambda[2G(\nu_2-\nu_3)]\geq\sigma_2^{tr}-\sigma_3^{tr}-\triangle\lambda[2G(1-\sin\psi)].
\label{est_right}
\end{equation}
After substitution (\ref{flow31}), (\ref{flow323}), and (\ref{correction_problem2})$_2$ into (\ref{correction_problem2})$_3$, we arrive at $q_r^{tr}(\triangle\lambda)=0$, where 
\begin{eqnarray}
q^{tr}_r(\gamma)&=&(1+\sin\phi)\sigma_1^{tr}-\frac{1}{2}(1-\sin\phi)(\sigma_2^{tr}+\sigma_3^{tr})-2\left[c_0+H\left(\bar{\varepsilon}^{p,tr}+\gamma (2\cos\phi)\right)\right]\cos\phi-\nonumber\\
&&\gamma\left[\frac{4}{3}(3K-2G)\sin\psi\sin\phi+ 2G(1+\sin\psi)(1+\sin\phi)+ G(1-\sin\psi)(1-\sin\phi)\right].\qquad
\label{q3}
\end{eqnarray}
Further, from (\ref{flow31})-(\ref{q3}),  three additional important consequences follow:
\begin{itemize}
\item $\sigma_1^{tr}>\sigma_2^{tr}\geq\sigma_3^{tr}$,
\item $\sigma_1$, $\sigma_3$, $\triangle\lambda$ depend on $\sigma^{tr}_2$, $\sigma^{tr}_3$ only through $\sigma^{tr}_2+\sigma^{tr}_3$,
\item $\triangle\lambda\in C^{tr}_r:=\{\gamma\in(0,+\infty)\ |\; \gamma^{tr}_{s,r}\leq\gamma<\gamma^{tr}_{r,a}\}$, where
\begin{equation}
\gamma^{tr}_{r,a}=\frac{2\sigma_1^{tr}-\sigma_2^{tr}-\sigma_3^{tr}}{2G(3+\sin\psi)}=\frac{1-\sin\psi}{3+\sin\psi}\gamma^{tr}_{s,r}+\left(1-\frac{1-\sin\psi}{3+\sin\psi}\right)\gamma^{tr}_{s,l}\geq0.
\label{gamma3}
\end{equation}
and $\gamma^{tr}_{s,l}$, $\gamma^{tr}_{s,r}$ are the same as in (\ref{gamma1}). Notice that $\gamma^{tr}_{s,r}<\gamma^{tr}_{r,a}<\gamma^{tr}_{s,l}$ in this case.
\end{itemize}

\subsection{The return to the apex} 
\label{subsec_apex}

Assume $\sigma_1=\sigma_2=\sigma_3$. Then $ \nu_1+ \nu_2+ \nu_3=2\sin\psi$ and $1+\sin\psi\geq \nu_1\geq \nu_2\geq \nu_3\geq-1+\sin\psi$ implying $2\nu_1-\nu_2-\nu_3\leq3+\sin\psi$, $\nu_1+\nu_2-2\nu_3\leq3-\sin\psi$. Consequently, (\ref{correction_problem2})$_1$ yields:
\begin{equation}
\sigma_1=\frac{1}{3}(\sigma_1+\sigma_2+\sigma_3)=\frac{1}{3}(\sigma_1^{tr}+\sigma_2^{tr}+\sigma_3^{tr})-\triangle\lambda[2K\sin\psi]\quad\label{flow4123}
\end{equation}
and
\begin{eqnarray}
0&=&2\sigma_1-\sigma_2-\sigma_3\geq2\sigma_1^{tr}-\sigma_2^{tr}-\sigma_3^{tr}-\triangle\lambda[2G(3+\sin\psi)],\label{est_apex1}\\
0&=&\sigma_1+\sigma_2-2\sigma_3\geq\sigma_1^{tr}+\sigma_2^{tr}-2\sigma_3^{tr}-\triangle\lambda[2G(3-\sin\psi)].\label{est_apex2}
\end{eqnarray}
After substitution (\ref{flow4123}) and (\ref{correction_problem2})$_2$ into (\ref{correction_problem2})$_3$, we arrive at $q_a^{tr}(\triangle\lambda)=0$, where 
\begin{equation}
q^{tr}_a(\gamma)=\frac{2}{3}(\sigma_1^{tr}+\sigma_2^{tr}+\sigma_3^{tr})\sin\phi-2\left[c_0+H\left(\bar{\varepsilon}^{p,tr}+\gamma (2\cos\phi)\right)\right]\cos\phi-\gamma[4K\sin\psi\sin\phi]. \label{q4}
\end{equation}
Further, from (\ref{flow4123})-(\ref{q4}),  two additional important consequences follow:
\begin{itemize}
\item $\sigma_1$, $\triangle\lambda$ depend on $\sigma^{tr}_1$, $\sigma^{tr}_2$, $\sigma^{tr}_3$ only through $\sigma^{tr}_1+\sigma^{tr}_2+\sigma^{tr}_3$,
\item $\triangle\lambda\in C^{tr}_a,\quad C^{tr}_a:=\{\gamma\in(0,+\infty)\ |\; \gamma\geq\max\{\gamma^{tr}_{l,a},\gamma^{tr}_{r,a}\}\},$
where $\gamma^{tr}_{l,a}$, $\gamma^{tr}_{r,a}$ are the same as in (\ref{gamma2}), (\ref{gamma3}), respectively. 
\end{itemize}

\subsection{Solvability analysis and a priori decision criteria}
\label{subsec_lambda}

In parallel Sections \ref{subsec_smooth}-\ref{subsec_apex}, the solution schemes for the investigated return types were introduced. Similar schemes are also known from literature (see,  e.g., \cite[Section 8]{NPO08}) and their solutions are candidates on the solution to problem (\ref{correction_problem2}). This current approach is based on a blind guesswork since the position of the stress tensor on the yield surface is not a priori known. However at the ends of Sections \ref{subsec_smooth}-\ref{subsec_apex}, we also derived some additional results following from (\ref{prop_n_i}), i.e., from the knowledge of $\partial g(\mbf\sigma)$. These results enable to improve the solution scheme to problem (\ref{correction_problem2}).  First, we use the sets $C^{tr}_s$, $C^{tr}_l$, $C^{tr}_r$, $C^{tr}_a$, the values $\gamma^{tr}_{s,l}$, $\gamma^{tr}_{s,r}$, $\gamma^{tr}_{l,a}$, $\gamma^{tr}_{r,a}$, and the equations $q_s^{tr}(\triangle\lambda)=0$, $q_l^{tr}(\triangle\lambda)=0$, $q_r^{tr}(\triangle\lambda)=0$, $q_a^{tr}(\triangle\lambda)=0$ introduced above to find a unique nonlinear equation for the unknown plastic multiplier.
\begin{lem}
There exists a unique function $q^{tr}:\mathbb R_+\rightarrow\mathbb R$ satisfying:
\begin{itemize}
\item[$(i)$] $q^{tr}|_{C^{tr}_s}=q^{tr}_s$, $q^{tr}|_{C^{tr}_l}=q^{tr}_l$, $q^{tr}|_{C^{tr}_r}=q^{tr}_r$, $q^{tr}|_{C^{tr}_a}=q^{tr}_a$.
\item[$(ii)$] $q^{tr}$ is continuous, piecewise smooth and decreasing in $\mathbb R_+$.
\item[$(iii)$] $q^{tr}(0)=f(\mbf{\sigma}^{tr},H(\bar{\varepsilon}^{p,tr}))$.
\item[$(iv)$] $q^{tr}(\gamma)\rightarrow-\infty$ as $\gamma\rightarrow+\infty$.
\end{itemize}
\label{lem_crucial}
\end{lem}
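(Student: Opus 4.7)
The plan is to show that $C^{tr}_s,C^{tr}_l,C^{tr}_r,C^{tr}_a$ tile $(0,+\infty)$ disjointly, that the four explicit formulas $q^{tr}_s,q^{tr}_l,q^{tr}_r,q^{tr}_a$ agree at the shared endpoints, and that each piece contributes a strictly decreasing smooth restriction; property~(i) then determines $q^{tr}$ uniquely on $(0,+\infty)$, and property~(ii) extends it continuously to $\gamma=0$. The partition argument rests on the observation that (\ref{gamma2}) and (\ref{gamma3}) realise $\gamma^{tr}_{l,a}$ and $\gamma^{tr}_{r,a}$ as strict convex combinations of $\gamma^{tr}_{s,l}$ and $\gamma^{tr}_{s,r}$ with coefficients in $(0,1)$ (since $\sin\psi\in(0,1)$). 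In the case $\gamma^{tr}_{s,l}<\gamma^{tr}_{s,r}$ both $\gamma^{tr}_{l,a}$ and $\gamma^{tr}_{r,a}$ then lie strictly between $\gamma^{tr}_{s,l}$ and $\gamma^{tr}_{s,r}$, and a short comparison of their convex-combination coefficients gives $\gamma^{tr}_{r,a}<\gamma^{tr}_{l,a}$; hence the constraint $\gamma^{tr}_{s,r}\leq\gamma<\gamma^{tr}_{r,a}$ defining $C^{tr}_r$ is infeasible, $C^{tr}_r=\emptyset$, $\max\{\gamma^{tr}_{l,a},\gamma^{tr}_{r,a}\}=\gamma^{tr}_{l,a}$, and $\mathbb{R}_+=\{0\}\cup C^{tr}_s\cup C^{tr}_l\cup C^{tr}_a$ is a disjoint union. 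The opposite case $\gamma^{tr}_{s,l}>\gamma^{tr}_{s,r}$ is symmetric, and the degenerate case $\gamma^{tr}_{s,l}=\gamma^{tr}_{s,r}$ collapses both $C^{tr}_l$ and $C^{tr}_r$ to the empty set.

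Piecewise smoothness of each $q^{tr}_*$ on the interior of its piece is immediate from (\ref{q1})--(\ref{q4}) together with the piecewise smoothness of $H$. For continuity at the shared endpoints, direct algebraic manipulation of (\ref{q1}) and (\ref{q2}) yields the identity
\begin{equation*}
q^{tr}_s(\gamma)-q^{tr}_l(\gamma)=(1+\sin\phi)\,G(1+\sin\psi)\,[\gamma^{tr}_{s,l}-\gamma],
\end{equation*}
which vanishes at $\gamma=\gamma^{tr}_{s,l}$. Three analogous one-line identities, obtained by the same arithmetic using (\ref{gamma1})--(\ref{gamma3}), give $q^{tr}_s(\gamma^{tr}_{s,r})=q^{tr}_r(\gamma^{tr}_{s,r})$, $q^{tr}_l(\gamma^{tr}_{l,a})=q^{tr}_a(\gamma^{tr}_{l,a})$, and $q^{tr}_r(\gamma^{tr}_{r,a})=q^{tr}_a(\gamma^{tr}_{r,a})$. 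These matchings glue the pieces together and force $q^{tr}$ to be uniquely defined and continuous on $\mathbb{R}_+$.

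For monotonicity and the remaining properties, differentiating (\ref{q1})--(\ref{q4}) gives
\begin{equation*}
\frac{dq^{tr}_*}{d\gamma}(\gamma)=-4\cos^2\phi\;H'\bigl(\bar{\varepsilon}^{p,tr}+2\gamma\cos\phi\bigr)-A_*,\qquad *\in\{s,l,r,a\},
\end{equation*}
with $A_s=4G+(4K+\tfrac{4}{3}G)\sin\phi\sin\psi$, $A_a=4K\sin\phi\sin\psi$, $A_l=4K\sin\phi\sin\psi+\tfrac{1}{3}G\sin\phi\sin\psi+G(3-\sin\phi-\sin\psi)$, and the same formula with $3-\sin\phi-\sin\psi$ replaced by $3+\sin\phi+\sin\psi$ for $A_r$. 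Each $A_*$ is strictly positive because $K,G>0$ and $\phi,\psi\in(0,\pi/2)$, while $H'\geq0$ wherever it is defined, so every derivative is strictly negative; combined with the continuity above this yields strict monotonicity on all of $\mathbb{R}_+$. Property~(iii) is obtained by evaluating $q^{tr}$ at $0$ on whichever piece contains it: (\ref{q1}) gives $q^{tr}_s(0)=f(\mbf\sigma^{tr},H(\bar\varepsilon^{p,tr}))$ directly from (\ref{yield_function_simpl}), and when $C^{tr}_s$ is empty the degenerate equality $\sigma_1^{tr}=\sigma_2^{tr}$ (resp.\ $\sigma_2^{tr}=\sigma_3^{tr}$) collapses (\ref{q2}) (resp.\ (\ref{q3})) at $\gamma=0$ back to $f$. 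Property~(iv) uses the explicit linear term $-4K\sin\phi\sin\psi\cdot\gamma$ in $q^{tr}_a$, the only piece present for large $\gamma$, together with the non-positive hardening contribution. The main obstacle is the continuity verification at the piece interfaces, since the four formulas look rather different; but the thresholds $\gamma^{tr}_{s,l},\gamma^{tr}_{s,r},\gamma^{tr}_{l,a},\gamma^{tr}_{r,a}$ are engineered precisely so that these cancellations occur.
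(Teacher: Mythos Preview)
Your argument is correct and follows essentially the same route as the paper's proof: both recognise that the thresholds $\gamma^{tr}_{s,l},\gamma^{tr}_{s,r},\gamma^{tr}_{l,a},\gamma^{tr}_{r,a}$ force a disjoint partition of $(0,+\infty)$ into at most three nonempty pieces (one of $C^{tr}_l,C^{tr}_r$ always being empty), and that the four branch formulas match at the shared endpoints. The only presentational difference is that the paper compresses the piecewise definition into a single closed-form expression using positive parts $(\cdot)^+$---see (\ref{q1_tr}) and (\ref{q2_tr})---and then leaves the verification of (i)--(iv) to the reader, whereas you build $q^{tr}$ by gluing and supply the interface identities and derivative computations explicitly. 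Your explicit derivative formulas for $A_s,A_l,A_r,A_a$ and your interface identity $q^{tr}_s(\gamma)-q^{tr}_l(\gamma)=(1+\sin\phi)\,G(1+\sin\psi)\,[\gamma^{tr}_{s,l}-\gamma]$ are exactly the checks the paper leaves implicit. Incidentally, your ordering $\gamma^{tr}_{r,a}<\gamma^{tr}_{l,a}$ in the case $\gamma^{tr}_{s,l}<\gamma^{tr}_{s,r}$ is correct; the paper's displayed chain $\gamma^{tr}_{s,l}\leq\gamma^{tr}_{l,a}\leq\gamma^{tr}_{r,a}\leq\gamma^{tr}_{s,r}$ contains a typo in the middle inequality, though its subsequent partition and closed formula are consistent with your ordering.
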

\begin{proof}
Notice that the values $\gamma^{tr}_{s,l}$, $\gamma^{tr}_{s,r}$, $\gamma^{tr}_{l,a}$, and $\gamma^{tr}_{r,a}$ are nonnegative and a priori known. Moreover, from (\ref{gamma1}), (\ref{gamma2}) and (\ref{gamma3}), it follows that only two ordering of these values are possible: either $\gamma^{tr}_{s,l}\leq\gamma^{tr}_{l,a}\leq\gamma^{tr}_{r,a}\leq\gamma^{tr}_{s,r}$ or $\gamma^{tr}_{s,r}\leq\gamma^{tr}_{r,a}\leq\gamma^{tr}_{l,a}\leq\gamma^{tr}_{s,l}$. 

First, assume $\gamma^{tr}_{s,l}\leq\gamma^{tr}_{s,r}$. Then $C^{tr}_s=(0,\gamma^{tr}_{s,l})$, $C^{tr}_l=[\gamma^{tr}_{s,l},\gamma^{tr}_{l,a})$,  $C^{tr}_r=\emptyset$, and $C^{tr}_a=[\gamma^{tr}_{l,a},+\infty)$. Define the function 
\begin{eqnarray}
q^{tr}(\gamma)&=&(1+\sin\phi)\sigma_1^{tr}-(1-\sin\phi)\sigma_3^{tr}-\gamma\left[\frac{4}{3}(3K-2G)\sin\psi\sin\phi+4G(1+\sin\psi\sin\phi)\right]\nonumber\\
&&+G(1+\sin\psi)(1+\sin\phi)(\gamma-\gamma^{tr}_{s,l})^++\frac{1}{3}G(3-\sin\psi)(3-\sin\phi)(\gamma-\gamma^{tr}_{l,a})^+\nonumber\\
&&-2\left[c_0+H\left(\bar{\varepsilon}^{p,tr}+\gamma (2\cos\phi)\right)\right]\cos\phi,\quad\gamma\in(0,+\infty),
\label{q1_tr}
\end{eqnarray}
where $(.)^+$ denotes a positive part of a function. It is easy to verify that  $q^{tr}$ has the required properties under the assumptions on $H$ from Section \ref{sec_model}.

Secondly, assume  $\gamma^{tr}_{s,r}\leq\gamma^{tr}_{s,l}$. Then $C^{tr}_s=(0,\gamma^{tr}_{s,r})$,  $C^{tr}_l=\emptyset$, $C^{tr}_r=[\gamma^{tr}_{s,r},\gamma^{tr}_{r,a})$,  $C^{tr}_a=[\gamma^{tr}_{r,a},+\infty)$ and the function $q^{tr}$ with the required properties is defined as:
\begin{eqnarray}
q^{tr}(\gamma)&:=&(1+\sin\phi)\sigma_1^{tr}-(1-\sin\phi)\sigma_3^{tr}-\gamma\left[\frac{4}{3}(3K-2G)\sin\psi\sin\phi+4G(1+\sin\psi\sin\phi)\right]\nonumber\\
&&+G(1-\sin\psi)(1-\sin\phi)(\gamma-\gamma^{tr}_{s,r})^++\frac{1}{3}G(3+\sin\psi)(3+\sin\phi)(\gamma-\gamma^{tr}_{r,a})^+\nonumber\\
&&-2\left[c_0+H\left(\bar{\varepsilon}^{p,tr}+\gamma (2\cos\phi)\right)\right]\cos\phi.
\label{q2_tr}
\end{eqnarray}
\end{proof}

\begin{remark}
\emph{Notice that formulas (\ref{q1_tr}) and (\ref{q2_tr}) coincide for $\gamma^{tr}_{s,l}=\gamma^{tr}_{s,r}$. Hence, the function $q(\gamma; \sigma_1^{tr}, \sigma_1^{tr}, \sigma_1^{tr},\bar\varepsilon^{p,tr})=q^{tr}(\gamma)$ is continuous and piecewise smooth with respect to the trial variables. }
\label{rem_semismooth}
\end{remark}

Lemmas \ref{lem_crucial}, \ref{lem_reduction1} and (\ref{trial_admissibility}), (\ref{elast_solution}) imply the following main results.

\begin{theorem}
Let $q^{tr}(0)=f(\mbf{\sigma}^{tr},H(\bar{\varepsilon}^{p,tr}))\geq 0$. Then the equation $q^{tr}(\triangle\lambda)=0$ has a unique solution in $\mathbb R_+$. The solution vanishes if and only if $f(\mbf{\sigma}^{tr},H(\bar{\varepsilon}^{p,tr}))=0$. Moreover, if there are $\gamma_1,\gamma_2\geq0$ such that $\gamma_1<\gamma_2$, $q^{tr}(\gamma_1)>0$, and $q^{tr}(\gamma_2)<0$, then $\triangle\lambda\in(\gamma_1,\gamma_2)$. 
\end{theorem}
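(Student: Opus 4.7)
The plan is to derive the theorem as a direct packaging of Lemma \ref{lem_crucial}. That lemma endows $q^{tr}$ with the four properties needed here: continuity, strict monotonic decrease, the boundary value $q^{tr}(0)=f(\mbf\sigma^{tr},H(\bar\varepsilon^{p,tr}))$, and the divergence $q^{tr}(\gamma)\to -\infty$ as $\gamma\to +\infty$. Essentially no new analytical work is required; the entire substance of the argument lives inside Lemma \ref{lem_crucial}, where the piecewise definitions (\ref{q1_tr}) and (\ref{q2_tr}) were patched together and shown to be monotone.

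First, I would establish existence of a root. Under the hypothesis $q^{tr}(0)=f(\mbf\sigma^{tr},H(\bar\varepsilon^{p,tr}))\geq 0$, I combine this with $q^{tr}(\gamma)\to-\infty$ and the continuity of $q^{tr}$ (items $(ii)$--$(iv)$ of Lemma \ref{lem_crucial}) and invoke the intermediate value theorem to produce some $\triangle\lambda\in\mathbb R_+$ with $q^{tr}(\triangle\lambda)=0$. For uniqueness, I appeal directly to the strict monotonic decrease asserted in $(ii)$: a strictly decreasing real function attains each value at most once, hence the root is unique in $\mathbb R_+$.

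Next I verify the characterization of the trivial solution. If $f(\mbf\sigma^{tr},H(\bar\varepsilon^{p,tr}))=0$, then item $(iii)$ gives $q^{tr}(0)=0$, so $\triangle\lambda=0$ is the unique root. Conversely, if the unique root is $\triangle\lambda=0$, then $q^{tr}(0)=0$, which by $(iii)$ forces $f(\mbf\sigma^{tr},H(\bar\varepsilon^{p,tr}))=0$. The localization statement is again a one-line consequence of strict monotonicity: if $\gamma_1<\gamma_2$ with $q^{tr}(\gamma_1)>0>q^{tr}(\gamma_2)$, then $q^{tr}(\gamma)>0$ for every $\gamma\leq\gamma_1$ and $q^{tr}(\gamma)<0$ for every $\gamma\geq\gamma_2$, so the zero $\triangle\lambda$ must lie in the open interval $(\gamma_1,\gamma_2)$.

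There is no real obstacle here; the only minor point worth confirming is that ``decreasing'' in item $(ii)$ of Lemma \ref{lem_crucial} is indeed meant in the strict sense, which is necessary for uniqueness and is evident from the explicit formulas (\ref{q1_tr})--(\ref{q2_tr}) because the coefficient $4G(1+\sin\psi\sin\phi)$ of the linear term is strictly positive and $H$ is merely nondecreasing. Once that is noted, all three conclusions of the theorem fall out immediately.
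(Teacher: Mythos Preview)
Your proposal is correct and matches the paper's approach: the paper states this theorem without a separate proof, presenting it as an immediate consequence of Lemma~\ref{lem_crucial} (continuity, strict decrease, value at $0$, and divergence to $-\infty$), exactly as you unpack it via the intermediate value theorem and strict monotonicity. Your remark that the strictness of the decrease follows from the explicit formulas (\ref{q1_tr})--(\ref{q2_tr}) is a welcome clarification the paper leaves implicit.
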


\begin{theorem}
Let $f(\mbf{\sigma}^{tr},H(\bar{\varepsilon}^{p,tr}))>0$. Then problems (\ref{correction_problem}) and (\ref{correction_problem2}) have a unique solution. The solution components to problem (\ref{correction_problem2}) can be found in the following way:
\begin{enumerate}
\item Let $q^{tr}_s(\min\{\gamma^{tr}_{s,l},\gamma^{tr}_{s,r}\})< 0$. Then $\triangle\lambda\in C^{tr}_s$ is the unique solution to $q^{tr}_s(\triangle\lambda)=0$ and $\sigma_1>\sigma_2>\sigma_3$ can be computed from  (\ref{flow11})-(\ref{flow13}). Moreover, $\sigma_1^{tr}>\sigma_2^{tr}>\sigma_3^{tr}$.
\item  Let $\gamma^{tr}_{s,l}<\gamma^{tr}_{l,a}$, $q^{tr}_l(\gamma^{tr}_{s,l})\geq 0$ and $q^{tr}_l(\gamma^{tr}_{l,a})< 0$. Then $\triangle\lambda\in C^{tr}_l$ is the unique solution to $q^{tr}_l(\triangle\lambda)=0$ and $\sigma_1=\sigma_2>\sigma_3$ can be computed from (\ref{flow212}),  and (\ref{flow23}). Moreover, $\sigma_2^{tr}>\sigma_3^{tr}$ and $\triangle\lambda$, $\sigma_1$, $\sigma_3$ depend on $\sigma^{tr}_1$, $\sigma^{tr}_2$ only through $\sigma^{tr}_1+\sigma^{tr}_2$.
\item Let $\gamma^{tr}_{s,r}<\gamma^{tr}_{r,a}$, $q^{tr}_r(\gamma^{tr}_{s,r})\geq 0$ and $q^{tr}_r(\gamma^{tr}_{r,a})< 0$. Then $\triangle\lambda\in C^{tr}_r$ is the unique solution to $q^{tr}_r(\triangle\lambda)=0$ and $\sigma_1>\sigma_2=\sigma_3$ can be computed from (\ref{flow31}),  and (\ref{flow323}). Moreover, $\sigma_1^{tr}>\sigma_2^{tr}$ and $\triangle\lambda$, $\sigma_1$, $\sigma_3$ depend on $\sigma^{tr}_2$, $\sigma^{tr}_3$ only through $\sigma^{tr}_2+\sigma^{tr}_3$.
\item  Let $q^{tr}_a(\max\{\gamma^{tr}_{l,a},\gamma^{tr}_{r,a}\})\geq 0$. Then $\triangle\lambda\in C^{tr}_a$ is the unique solution to $q^{tr}_a(\triangle\lambda)=0$ and $\sigma_1=\sigma_2=\sigma_3$ can be computed from (\ref{flow4123}). Moreover, $\triangle\lambda$ and $\sigma_1$  depend on $\sigma^{tr}_1$, $\sigma^{tr}_2$ and $\sigma^{tr}_3$ only through $\sigma^{tr}_1+\sigma^{tr}_2+\sigma^{tr}_3$.
\end{enumerate} 
The component $\bar{\varepsilon}^p$ can be computed from (\ref{correction_problem2}) for all of these cases.
\label{th_main}
\end{theorem}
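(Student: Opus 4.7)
The plan is to stitch together Lemma \ref{lem_crucial} and Lemma \ref{lem_reduction1}. By Lemma \ref{lem_crucial}, the master function $q^{tr}$ is continuous, strictly decreasing on $\mathbb R_+$, satisfies $q^{tr}(0) = f(\mbf\sigma^{tr}, H(\bar\varepsilon^{p,tr})) > 0$ by assumption, and $q^{tr}(\gamma) \to -\infty$ as $\gamma \to \infty$. Hence $q^{tr}(\Delta\lambda) = 0$ has a unique positive solution. Moreover the intervals $C_s^{tr}, C_l^{tr}, C_r^{tr}, C_a^{tr}$ partition $(0, \infty)$ (exactly one of $C_l^{tr}$, $C_r^{tr}$ is empty depending on whether $\gamma_{s,l}^{tr} \leq \gamma_{s,r}^{tr}$ or $\gamma_{s,r}^{tr} \leq \gamma_{s,l}^{tr}$), and on each nonempty piece $q^{tr}$ agrees with the corresponding $q_\star^{tr}$ from Sections \ref{subsec_smooth}--\ref{subsec_apex}.

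To localize $\Delta\lambda$, I would exploit strict monotonicity: the zero sits in the unique piece whose left endpoint gives a nonnegative value of $q^{tr}$ and whose right endpoint gives a negative one. In case 1 the hypothesis $q^{tr}_s(\min\{\gamma^{tr}_{s,l}, \gamma^{tr}_{s,r}\}) < 0$ together with $q^{tr}_s(0) = q^{tr}(0) > 0$ pins $\Delta\lambda$ in $C_s^{tr}$. In case 2, the nonemptiness condition $\gamma^{tr}_{s,l} < \gamma^{tr}_{l,a}$ forces (by (\ref{gamma2})) the ordering $\gamma^{tr}_{s,l} < \gamma^{tr}_{s,r}$, so $C_l^{tr} = [\gamma^{tr}_{s,l}, \gamma^{tr}_{l,a})$ is nonempty, and the sign pattern $q^{tr}_l(\gamma^{tr}_{s,l}) \geq 0 > q^{tr}_l(\gamma^{tr}_{l,a})$ places $\Delta\lambda$ there. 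Cases 3 and 4 are treated symmetrically, using the analogous endpoint values for $C_r^{tr}$ and the single endpoint of $C_a^{tr}$.

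Once the return type is identified, the corresponding linear system from Sections \ref{subsec_smooth}--\ref{subsec_apex} yields the principal stresses via the explicit formulas (\ref{flow11})--(\ref{flow13}), (\ref{flow212})--(\ref{flow23}), (\ref{flow31})--(\ref{flow323}), or (\ref{flow4123}). The supplementary structural statements in each case (strict ordering of the trial eigenvalues and the dependence of $\sigma_1, \sigma_3, \Delta\lambda$ only on certain sums $\sigma_1^{tr}+\sigma_2^{tr}$, $\sigma_2^{tr}+\sigma_3^{tr}$, or $\sigma_1^{tr}+\sigma_2^{tr}+\sigma_3^{tr}$) are exactly the bulleted consequences already derived at the end of the respective subsection, so no extra argument is needed. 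The variable $\bar\varepsilon^p$ is then recovered from (\ref{correction_problem2})$_2$.

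Uniqueness of the reduced problem (\ref{correction_problem2}) follows from uniqueness of $\Delta\lambda$ combined with the explicit recovery formulas. Lifting this to uniqueness of (\ref{correction_problem}) uses Lemma \ref{lem_reduction1}: any solution $\mbf\sigma$ of (\ref{correction_problem}) produces the ordered eigenvalues $\sigma_i = \omega_i(\mbf\sigma)$, which solve (\ref{correction_problem2}) and hence are uniquely determined; the reconstruction $\mbf\sigma = \sum_i \sigma_i \mathbf e_i^{tr} \otimes \mathbf e_i^{tr}$ in any $(\mathbf e_1^{tr}, \mathbf e_2^{tr}, \mathbf e_3^{tr}) \in V(\mbf\sigma^{tr})$ is invariant under the residual freedom inside eigenspaces of $\mbf\sigma^{tr}$ because, on any such eigenspace, the corresponding $\sigma_i$'s coincide by property $(ii)$ of Section \ref{subsec_corrections}. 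The main obstacle I expect is the bookkeeping needed to verify that the four sign conditions are mutually exclusive and exhaustive; this rests on the ordering chain $\gamma^{tr}_{s,l} \leq \gamma^{tr}_{l,a} \leq \gamma^{tr}_{r,a} \leq \gamma^{tr}_{s,r}$ (or its reverse) established inside the proof of Lemma \ref{lem_crucial}, combined with strict monotonicity of $q^{tr}$ across the boundaries between consecutive pieces.
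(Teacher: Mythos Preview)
Your proposal is correct and mirrors the paper's approach: the paper does not give a separate proof of Theorem \ref{th_main} but states that it (together with the surrounding theorems) follows directly from Lemmas \ref{lem_crucial} and \ref{lem_reduction1}, and your argument is precisely the detailed unpacking of that implication. The localization via endpoint signs, the recovery of $\sigma_i$ from the per-case formulas, the quoted bulleted consequences for the structural claims, and the lift to (\ref{correction_problem}) via Lemma \ref{lem_reduction1} are all exactly what the paper intends.
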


\begin{theorem}
The discretized constitutive problem (\ref{k_step_problem}) has a unique solution. 
\label{th_main2}
\end{theorem}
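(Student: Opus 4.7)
The plan is to split along the sign of the trial yield value $f^{tr}:=f(\mbf{\sigma}^{tr},H(\bar{\varepsilon}^{p,tr}))$ and combine Theorem \ref{th_main} (which already gives existence and uniqueness whenever $f^{tr}>0$) with Lemma \ref{lem_crucial} (which controls the scalar equation $q^{tr}(\triangle\lambda)=0$) to cover the admissible case and to rule out spurious plastic branches.

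In the admissible case $f^{tr}\leq 0$, I would first verify that the elastic triple (\ref{elast_solution}) fulfills every line of (\ref{k_step_problem}): with $\triangle\lambda=0$ the stress update and hardening update are trivial, some compatible $\mbf\nu\in\partial g(\mbf\sigma^{tr})$ exists by Lemma \ref{lem_subdif} (and is harmless since it is multiplied by zero), and the complementarity triple reduces to the hypothesis $f^{tr}\leq 0$. For uniqueness, I would take an arbitrary solution $(\mbf\sigma,\bar\varepsilon^p,\triangle\lambda)$ and argue by contradiction: if $\triangle\lambda=0$ then the first two equations immediately yield $(\mbf\sigma,\bar\varepsilon^p)=(\mbf\sigma^{tr},\bar\varepsilon^{p,tr})$; if on the contrary $\triangle\lambda>0$, the complementarity $\triangle\lambda f(\mbf\sigma,H(\bar\varepsilon^p))=0$ forces $f(\mbf\sigma,H(\bar\varepsilon^p))=0$, so the triple solves (\ref{correction_problem}) and, through Lemma \ref{lem_reduction1} and the construction in Sections \ref{subsec_smooth}--\ref{subsec_apex}, produces a positive root of $q^{tr}(\triangle\lambda)=0$. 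This contradicts Lemma \ref{lem_crucial}, which gives $q^{tr}(0)=f^{tr}\leq 0$ together with strict monotone decrease of $q^{tr}$, hence $q^{tr}(\gamma)<0$ for every $\gamma>0$.

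In the inadmissible case $f^{tr}>0$, the elastic candidate is immediately excluded by the yield inequality in (\ref{k_step_problem})$_3$, so every solution must have $\triangle\lambda>0$ and therefore also $f(\mbf\sigma,H(\bar\varepsilon^p))=0$ by complementarity. The system (\ref{k_step_problem}) then coincides exactly with the plastic correction problem (\ref{correction_problem}), and Theorem \ref{th_main} supplies both existence and uniqueness directly.

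The main obstacle I anticipate is essentially bookkeeping rather than hard analysis: one must be careful that the case $\triangle\lambda>0$ with $f^{tr}\leq 0$ is truly impossible, which is exactly where Lemma \ref{lem_crucial} (strict decrease of $q^{tr}$ with $q^{tr}(0)=f^{tr}$) does the real work. Everything else is a clean reduction between (\ref{k_step_problem}) and (\ref{correction_problem}) via the complementarity dichotomy, so no further estimates are required.
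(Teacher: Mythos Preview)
Your proposal is correct and follows essentially the same route as the paper. The paper does not give a detailed proof of Theorem~\ref{th_main2}; it simply records that the result is implied by Lemmas~\ref{lem_reduction1} and~\ref{lem_crucial} together with (\ref{trial_admissibility}) and (\ref{elast_solution}), and your argument is precisely the unpacking of that implication via the complementarity dichotomy on $f^{tr}$.
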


\begin{remark}
\emph{Notice that Theorem \ref{th_main} contains the solution scheme to problem (\ref{correction_problem2}) and summarizes the advantages of the subdifferential treatment within the constitutive solution:
\begin{enumerate}
\item {\it Existence and uniqueness of the solution.} This expected result is  not usually discussed in literature.
\item {\it A priori known decision criteria.} Such criteria were known only for linear function $H$ (see, e.g., \cite{LR96}) where the solution components can be found in closed forms.
\item {\it Dependence of $\sigma_1,\sigma_2,\sigma_3, \triangle\lambda$ on $\sigma_1^{tr},\sigma_2^{tr},\sigma_3^{tr}$} has been described in more detail than it is known from literature. This enables us to simplify construction of the stress-strain and consistent tangent operators introduced in the next section, and discuss semismoothness of the stress-strain operator.
\end{enumerate}}
\label{remark_advantages}
\end{remark}

\section{Stress-strain and consistent tangent operators}
\label{sec.Stress-strain_relation}

In this section, we extend the solution scheme from Theorem \ref{th_main} to problem (\ref{k_step_problem}) and define the stress-strain operator and its derivative, i.e., the {\it consistent tangent operator}. Beside the results from Section \ref{sec_time_discret}, we also use the framework from Section \ref{subsec_eigenprojection} based on eigenprojections and their derivatives.

The stress-strain relation can be represented by an implicit function $\mbf T$:
\begin{equation*}
\mbf{\sigma}(t_k):=\mbf T\left(\mbf{\varepsilon}(t_k);\mbf{\varepsilon^p}(t_{k-1}),\bar{\varepsilon}^p(t_{k-1})\right).
\label{T0}
\end{equation*}
If we fix step $k$ and recall $\bar{\varepsilon}^{p,tr}=\bar{\varepsilon}^p(t_{k-1})$, $\mbf\varepsilon^{tr}=\mbf\varepsilon(t_{k})-\mbf{\varepsilon^p}(t_{k-1})$, one can write
\begin{equation}
\mbf{\sigma}:=\mbf T\left(\mbf{\varepsilon};\mbf{\varepsilon^p}(t_{k-1}),\bar{\varepsilon}^p(t_{k-1})\right)=\mbf S\left(\mbf{\varepsilon}^{tr},\bar{\varepsilon}^{p,tr}\right)
\label{T}
\end{equation}
omitting the subscript $k$. The consistent tangent operator for step $k$ will be represented the Fr\' echet derivative $\mathcal D\mbf S\equiv\mathcal D_{\mbf\varepsilon^{tr}}\mbf S$. If it exists at $\left(\mbf{\varepsilon}^{tr},\bar{\varepsilon}^{p,tr}\right)$ then $\mathcal D_{\mbf{\varepsilon}}\mbf T=\mathcal D_{\mbf\varepsilon^{tr}} \mbf S$.
It is sufficient to derive the operators $\mbf S$ and $\mathcal D\mbf S$ on the following open sets:
\begin{eqnarray*}
M^{tr}_{e}&=&\{\mbf\varepsilon^{tr}\in\mathbb R^{3\times 3}_{sym}\; |\; q^{tr}(0)=q^{tr}_s(0)=f\left(\mbf{\sigma}^{tr},H(\bar{\varepsilon}^{p,tr})\right)<0\},\\
M^{tr}_{s}&=&\{\mbf\varepsilon^{tr}\in\mathbb R^{3\times 3}_{sym}\; |\; q^{tr}_s(0)>0,\; q^{tr}_s(\min\{\gamma^{tr}_{s,l},\gamma^{tr}_{s,r}\})< 0\},\\
M^{tr}_{l}&=&\{\mbf\varepsilon^{tr}\in\mathbb R^{3\times 3}_{sym}\; |\; \gamma^{tr}_{s,l}<\gamma^{tr}_{l,a},\; q^{tr}_l(\gamma^{tr}_{s,l})> 0,\; q^{tr}_l(\gamma^{tr}_{l,a})< 0\},\\
M^{tr}_{r}&=&\{\mbf\varepsilon^{tr}\in\mathbb R^{3\times 3}_{sym}\; |\; \gamma^{tr}_{s,r}<\gamma^{tr}_{r,a},\; q^{tr}_r(\gamma^{tr}_{s,r})> 0,\; q^{tr}_r(\gamma^{tr}_{r,a})< 0\},\\
M^{tr}_{a}&=&\{\mbf\varepsilon^{tr}\in\mathbb R^{3\times 3}_{sym}\; |\; q^{tr}_a(\max\{\gamma^{tr}_{l,a},\gamma^{tr}_{r,a}\})>0\}.
\end{eqnarray*}
From Section \ref{sec_time_discret}, it follows that these sets are mutually disjoint and the closure of their union is equal to $\mathbb R^{3\times 3}_{sym}$ since $\mbf{\sigma}^{tr}=\mathbb D_e:\mbf\varepsilon^{tr}$. Further, the tensors $\mbf{\sigma}^{tr}$ and $\mbf{\varepsilon}^{tr}$ have the same eigenvectors and their eigenvalues are related as follows:
\begin{equation}
\sigma_i^{tr}=\frac{1}{3}(3K-2G)(\varepsilon^{tr}_1+\varepsilon^{tr}_2+\varepsilon^{tr}_3)+2G\varepsilon^{tr}_i,\quad i=1,2,3.
\label{sigma_i^tr}
\end{equation}
Hence, $\varepsilon^{tr}_i>\varepsilon^{tr}_j$ if and only if $\sigma^{tr}_i>\sigma^{tr}_j$ for any $i,j=1,2,3$. Therefore, $\mbf{\sigma}^{tr}$ and $\mbf{\varepsilon}^{tr}$ also have the same eigenprojections.
For the sake of simplicity, we assume that $H$ is differentiable at $\bar{\varepsilon}^{p,tr}_{k}+\triangle\lambda (2\cos\phi)$ and denote $H_1:=H'(\bar{\varepsilon}^{p,tr}_{k}+\triangle\lambda (2\cos\phi))$.

\bigskip\noindent
{\it The elastic response.}  Let $\mbf{\varepsilon}^{tr}\in M^{tr}_{e}$. Then, clearly,
\begin{equation}
\mbf S\left(\mbf{\varepsilon}^{tr},\bar{\varepsilon}^{p,tr}\right)=\mathbb D_e:\mbf\varepsilon^{tr},\quad \mathcal D\mbf S\left(\mbf{\varepsilon}^{tr},\bar{\varepsilon}^{p,tr}\right)=\mathbb D_e.
\end{equation}

\bigskip\noindent
{\it The return to the smooth portion.} Let $\mbf{\varepsilon}^{tr}\in M^{tr}_{s}$. Then
$\varepsilon_1^{tr}>\varepsilon_2^{tr}>\varepsilon_3^{tr}$ holds and consequently, the values $\mbf E_i^{tr}:=\mbf E_i(\mbf\varepsilon^{tr})$, $\mathbb E_i^{tr}:=\mathbb E_i(\mbf\varepsilon^{tr})$, $i=1,2,3,$ are well-defined as follows from Section \ref{subsec_eigenprojection}. Therefore,
\begin{equation}
\mbf S\left(\mbf{\varepsilon}^{tr},\bar{\varepsilon}^{p,tr}\right)=\sum_{i=1}^3\sigma_i\mbf E_i^{tr},\quad \mathcal D\mbf S\left(\mbf{\varepsilon}^{tr},\bar{\varepsilon}^{p,tr}\right)=\sum_{i=1}^3\left[\sigma_i\mathbb E_i^{tr}+\mbf E_i^{tr}\otimes\mathcal D\sigma_i\right].
\end{equation}
Since,
\begin{eqnarray*}
\mathcal D\sigma_1&\stackrel{(\ref{flow11})}{=}&\frac{1}{3}(3K-2G)\mbf I+2G\mbf E_1^{tr}-\mathcal D(\triangle\lambda)\left[\frac{2}{3}(3K-2G)\sin\psi+2G(1+\sin\psi)\right],\\
\mathcal D\sigma_2&\stackrel{(\ref{flow12})}{=}&\frac{1}{3}(3K-2G)\mbf I+2G\mbf E_2^{tr}-\mathcal D(\triangle\lambda)\left[\frac{2}{3}(3K-2G)\sin\psi\right],\\
\mathcal D\sigma_3&\stackrel{(\ref{flow13})}{=}&\frac{1}{3}(3K-2G)\mbf I+2G\mbf E_3^{tr}-\mathcal D(\triangle\lambda)\left[\frac{2}{3}(3K-2G)\sin\psi-2G(1-\sin\psi)\right],
\end{eqnarray*}
we have
\begin{eqnarray}
\mathcal D\mbf S\left(\mbf{\varepsilon}^{tr},\bar{\varepsilon}^{p,tr}\right)&=&\sum_{i=1}^3\left[\sigma_i\mathbb E_i^{tr}+2G\mbf E_i^{tr}\otimes\mbf E_i^{tr}\right]+\frac{1}{3}(3K-2G)\mbf I\otimes\mbf I-\nonumber\\
&&-\left[2G(1+\sin\psi)\mbf E_1^{tr}-2G(1-\sin\psi)\mbf E_3^{tr}+\frac{2}{3}(3K-2G)\sin\psi\mbf I\right]\otimes\mathcal D(\triangle\lambda),\qquad
\end{eqnarray}
where 
\begin{equation*}
\mathcal D(\triangle\lambda)\stackrel{(\ref{q1})}{=}\frac{2G(1+\sin\phi)\mbf E_1^{tr}-2G(1-\sin\phi)\mbf E_3^{tr}+\frac{2}{3}(3K-2G)\sin\phi\mbf I}{\frac{4}{3}(3K-2G)\sin\psi\sin\phi+4G(1+\sin\psi\sin\phi)+4H_1\cos^2\phi}.
\end{equation*}

\bigskip\noindent
{\it The return to the left edge.} Let $\mbf{\varepsilon}^{tr}\in M^{tr}_{l}$. Then $\sigma_1=\sigma_2$ and $\varepsilon_2^{tr}>\varepsilon_3^{tr}$. It means that one can introduce the notation $\mbf E_3^{tr}:=\mbf E_3(\mbf\varepsilon^{tr})$, $\mbf E_{12}^{tr}:=\mbf I-\mbf E_3^{tr}$, $\mathbb E_3^{tr}:=\mathbb E_3(\mbf\varepsilon^{tr})$ and write 
\begin{equation}
\mbf S\left(\mbf{\varepsilon}^{tr},\bar{\varepsilon}^{p,tr}\right)=\sigma_1\mbf E_{12}^{tr}+\sigma_3\mbf E_3^{tr},\quad \mathcal D\mbf S\left(\mbf{\varepsilon}^{tr},\bar{\varepsilon}^{p,tr}\right)=(\sigma_3-\sigma_1)\mathbb E_3^{tr}+\mbf E_{12}^{tr}\otimes\mathcal D\sigma_1+\mbf E_{3}^{tr}\otimes\mathcal D\sigma_3.
\label{sigmaE2}
\end{equation}
Since,
\begin{eqnarray*}
\mathcal D\sigma_1&\stackrel{(\ref{flow212})}{=}&\frac{1}{3}(3K-2G)\mbf I+G\mbf E_{12}^{tr}-\mathcal D(\triangle\lambda)\left[\frac{2}{3}(3K-2G)\sin\psi+G(1+\sin\psi)\right],\\
\mathcal D\sigma_3&\stackrel{(\ref{flow23})}{=}&\frac{1}{3}(3K-2G)\mbf I+2G\mbf E_3^{tr}-\mathcal D(\triangle\lambda)\left[\frac{2}{3}(3K-2G)\sin\psi-2G(1-\sin\psi)\right],
\end{eqnarray*}
we have
\begin{eqnarray}
\mathcal D\mbf S\left(\mbf{\varepsilon}^{tr},\bar{\varepsilon}^{p,tr}\right)&=&(\sigma_3-\sigma_1)\mathbb E_3^{tr}+G\mbf E_{12}^{tr}\otimes\mbf E_{12}^{tr}+2G\mbf E_3^{tr}\otimes\mbf E_3^{tr}+\frac{1}{3}(3K-2G)\mbf I\otimes\mbf I-\nonumber\\
&&-\left[G(1+\sin\psi)\mbf E_{12}^{tr}-2G(1-\sin\psi)\mbf E_3^{tr}+\frac{2}{3}(3K-2G)\sin\psi\mbf I\right]\otimes\mathcal D(\triangle\lambda),\qquad
\end{eqnarray}
where 
\begin{equation*}
\mathcal D(\triangle\lambda)\stackrel{(\ref{q2})}{=}\frac{G(1+\sin\phi)\mbf E_{12}^{tr}-2G(1-\sin\phi)\mbf E_3^{tr}+\frac{2}{3}(3K-2G)\sin\phi\mbf I}{\frac{4}{3}(3K-2G)\sin\psi\sin\phi+G(1+\sin\psi)(1+\sin\phi)+2G(1-\sin\psi)(1-\sin\phi)+4H_1\cos^2\phi}.
\end{equation*}

\bigskip\noindent
{\it The return to the right edge.} Let $\mbf{\varepsilon}^{tr}\in M^{tr}_{r}$.   Then  $\sigma_2=\sigma_3$ and $\varepsilon_1^{tr}>
\varepsilon_2^{tr}$. It means that one can introduce the notation $\mbf E_1^{tr}:=\mbf E_1(\mbf\varepsilon^{tr})$, $\mbf E_{23}^{tr}:=\mbf I-\mbf E_1^{tr}$, $\mathbb E_1^{tr}:=\mathbb E_1(\mbf\varepsilon^{tr})$ and write 
\begin{equation}
\mbf S\left(\mbf{\varepsilon}^{tr},\bar{\varepsilon}^{p,tr}\right)=\sigma_1\mbf E_{1}^{tr}+\sigma_3\mbf E_{23}^{tr},\quad \mathcal D\mbf S\left(\mbf{\varepsilon}^{tr},\bar{\varepsilon}^{p,tr}\right)=(\sigma_1-\sigma_3)\mathbb E_1^{tr}+\mbf E_{1}^{tr}\otimes\mathcal D\sigma_1+\mbf E_{23}^{tr}\otimes\mathcal D\sigma_3.
\label{sigmaE3}
\end{equation}
Since,
\begin{eqnarray*}
\mathcal D\sigma_1&\stackrel{(\ref{flow31})}{=}&\frac{1}{3}(3K-2G)\mbf I+2G\mbf E_{1}^{tr}-\mathcal D(\triangle\lambda)\left[\frac{2}{3}(3K-2G)\sin\psi+2G(1+\sin\psi)\right],\\
\mathcal D\sigma_3&\stackrel{(\ref{flow323})}{=}&\frac{1}{3}(3K-2G)\mbf I+G\mbf E_{23}^{tr}-\mathcal D(\triangle\lambda)\left[\frac{2}{3}(3K-2G)\sin\psi-G(1-\sin\psi)\right],
\end{eqnarray*}
we have
\begin{eqnarray}
\mathcal D\mbf S\left(\mbf{\varepsilon}^{tr},\bar{\varepsilon}^{p,tr}\right)&=&(\sigma_1-\sigma_3)\mathbb E_1^{tr}+2G\mbf E_{1}^{tr}\otimes\mbf E_{1}^{tr}+G\mbf E_{23}^{tr}\otimes\mbf E_{23}^{tr}+\frac{1}{3}(3K-2G)\mbf I\otimes\mbf I-\nonumber\\
&&-\left[2G(1+\sin\psi)\mbf E_{1}^{tr}-G(1-\sin\psi)\mbf E_{23}^{tr}+\frac{2}{3}(3K-2G)\sin\psi\mbf I\right]\otimes\mathcal D(\triangle\lambda),\qquad
\end{eqnarray}
where 
\begin{equation*}
\mathcal D(\triangle\lambda)\stackrel{(\ref{q3})}{=}\frac{2G(1+\sin\phi)\mbf E_{1}^{tr}-G(1-\sin\phi)\mbf E_{23}^{tr}+\frac{2}{3}(3K-2G)\sin\phi\mbf I}{\frac{4}{3}(3K-2G)\sin\psi\sin\phi+2G(1+\sin\psi)(1+\sin\phi)+G(1-\sin\psi)(1-\sin\phi)+4H_1\cos^2\phi}.
\end{equation*}

\bigskip\noindent
{\it The return to the apex.} Let $\mbf{\varepsilon}^{tr}\in M^{tr}_{a}$. Then $\sigma_1=\sigma_2=\sigma_3:=p$ and
\begin{equation}
\mbf S\left(\mbf{\varepsilon}^{tr},\bar{\varepsilon}^{p,tr}\right)=p\mbf I,\quad p=p^{tr}-(2K\sin\psi)\triangle\lambda,\quad p^{tr}=\frac{1}{3}(\sigma_1^{tr}+\sigma_2^{tr}+\sigma_3^{tr})=K(\varepsilon_1^{tr}+\varepsilon_2^{tr}+\varepsilon_3^{tr}),
\label{sigmaE4}
\end{equation}
\begin{equation*}
\mathcal D\mbf S\left(\mbf{\varepsilon}^{tr},\bar{\varepsilon}^{p,tr}\right)\stackrel{(\ref{sigmaE4})}{=}\frac{\partial p}{\partial p^{tr}}K\mbf I\otimes\mbf I=\left(1-2K\sin\psi\frac{\partial\triangle\lambda}{\partial p^{tr}}\right)K\mbf I\otimes\mbf I.
\end{equation*}
Here, we use $\frac{\partial p^{tr}}{\partial \mbf{\varepsilon}^{tr}}=K\mbf I$. From the implicit equation $q^{tr}_a(\triangle\lambda)=0$, we obtain
\begin{equation*}
\frac{\partial\triangle\lambda}{\partial p^{tr}}\stackrel{(\ref{q4})}{=}\frac{\sin\phi}{2K\sin\psi\sin\phi+2H_1\cos^2\phi}.
\label{dlambda4}
\end{equation*}
Hence,
\begin{equation}
\mathcal D\mbf S\left(\mbf{\varepsilon}^{tr},\bar{\varepsilon}^{p,tr}\right)=K\left(1-\frac{K\sin\psi\sin\phi}{K\sin\psi\sin\phi+H_1\cos^2\phi}\right)\mbf I\otimes\mbf I.
\end{equation}

\begin{remark}
\emph{For each of the return type, we derived just one formula for $\mathcal D\mbf S$ without any other branching that depends on multiplicity of $\varepsilon_1^{tr},\varepsilon_2^{tr},\varepsilon_3^{tr}$. This was achieved due to deeper analysis of dependencies within the constitutive solution, see Theorem \ref{th_main}. The additional branching in $\mathcal D\mbf S$ is introduced, e.g., in \cite[Appendix A]{NPO08}. In many other references, $\mathcal D\mbf S$ is correctly derived only under the assumption $\varepsilon_1^{tr}>\varepsilon_2^{tr}>\varepsilon_3^{tr}$. However, such formulas can cause significant rounding errors in vicinity of the multiple eigenvalues.}
\end{remark}

\begin{remark}
\emph{Notice that one can continuously extend the definition of $\mbf T\left(\cdot\,;\mbf{\varepsilon^p}(t_{k-1}),\bar{\varepsilon}^p(t_{k-1})\right)=\mbf S(\cdot\,,\bar\varepsilon^{p,tr})$ on $\mathbb R^{3\times 3}_{sym}\setminus (M^{tr}_{e}\cup M^{tr}_{s}\cup M^{tr}_{l}\cup M^{tr}_{r}\cup M^{tr}_{a}).$  Further, one can investigate semismoothness of $\mbf T\left(\cdot\,;\mbf{\varepsilon^p}(t_{k-1}),\bar{\varepsilon}^p(t_{k-1})\right)$  in $\mathbb R^{3\times 3}_{sym}$. This property ensures superlinear convergence of algorithms introduced in the next section. To show the semismoothness in $M^{tr}_{e}$, $M^{tr}_{s}$, $M^{tr}_{l}$, $M^{tr}_{r}$, and $M^{tr}_{a}$, one can use a standard framework introduced, e.g. in \cite{ GrVa09, SaWi11, Sy09,CKSV14, Sy14,SCKKZB15}.  At the remaining points, the semismoothness is also expected based on Remarks \ref{rem_semismooth} and \ref{remark_advantages}$_3$. However, its eventual proof seems to be more involved and we will skip it for the sake of brevity.} 
\end{remark}

Below, we use the notation $\mathbb T\left(\cdot\,;\mbf{\varepsilon^p}(t_{k-1}),\bar{\varepsilon}^p(t_{k-1})\right)$ for the Clark generalized derivative of $\mbf T$ with respect to the strain tensor. Clearly, $\mathbb T\left(\mbf{\varepsilon};\mbf{\varepsilon^p}(t_{k-1}),\bar{\varepsilon}^p(t_{k-1})\right)=\mathcal D_{\mbf\varepsilon}\mbf T\left(\mbf{\varepsilon};\mbf{\varepsilon^p}(t_{k-1}),\bar{\varepsilon}^p(t_{k-1})\right)$ when $\mbf T\left(\cdot\,;\mbf{\varepsilon^p}(t_{k-1}),\bar{\varepsilon}^p(t_{k-1})\right)$ is differentiable at $\mbf{\varepsilon}$.


\section{Direct and indirect methods of incremental limit analysis}
\label{sec_realization}

Inserting the stress-strain operator $\mbf T$ to the balance equation, we obtain the incremental boundary value elastoplastic problem \cite{NPO08,SCKKZB15}. This problem is further discretized in space by the finite element method and combined with the limit load analysis as it is usual in Mohr-Coulomb plasticity \cite{NPO08, CL90} . In this section, we introduce the direct and indirect methods of the incremental limit analysis. For the sake of brevity, we focus only on an algebraic formulation of the problem. 

The vector of  internal forces and the consistent tangent stiffness matrix at the $k$-th step are represented by functions $\mbf F_k:\mathbb R^n\rightarrow\mathbb R^n$ and $\mbf K_k:\mathbb R^n\rightarrow\mathbb R^{n\times n}$, respectively. It is worth mentioning that $\mbf F_k$ and $\mbf K_k$ are assembled using the operators $\mbf T$ and $\mathbb T$ at each integration point \cite{SCKKZB15}. Notice that the algebraic representation of the used second and fourth order tensors is introduced in Appendix B. Further, we consider the load of external forces at step $k$ in the form $\zeta_k\mbf l$ where $\mbf l\in\mathbb R^n$ is fixed and $\zeta_k:=\zeta(t_k)$. Then the $k$-step problem reads as:
$$(\mathcal P_k)_\zeta \qquad\mbox{given }\zeta_{k}\in\mathbb R_+, \mbox{ find } \mbf u_{k}\in\mathbb R^n:\quad \mbf{F}_{k}(\mbf{u}_{k})=\zeta_{k}\mbf{l},$$ 
where $\mbf u_k$ is the displacement vector.
We assume that the parameters $\zeta$ and $t$ coincide and their limit value $\zeta_{lim}$ is unknown. It is well known that the investigated body collapses when this critical (limit) value is exceeded. Therefore, $\zeta_{lim}$ is an important safety parameter and beyond $\zeta_{lim}$ no solution exists. Possibly $\zeta_{lim}=+\infty$, however in meaningful settings of the problem, $\zeta_{lim}$ is finite. The simplest computational technique is based on the so-called {\it incremental limit analysis} where we adaptively construct the sequence
$$0<\zeta_1<\zeta_2<\ldots<\zeta_{k}<\zeta_{k+1}<\ldots<\zeta_{lim}$$ 
depending on solvability of $(\mathcal P_k)_\zeta$ to detect inadmissible load factors. In practice, the increment of $\zeta_k$ decreases when a chosen numerical method does not converge at step $k$. Such blind determination of $\zeta_k$ is an evident drawback of this {\it direct method}.

More sophisticated adaptive strategy  is based on local and/or global material response of the body on the prescribed load history. To this end, we compute the values $\alpha_k=\mbf b^T\mbf u_k$, $k=1,2,\ldots$ where $\mbf u_k$ is the solution to $(\mathcal P_k)_\zeta$ and $\mbf b$ is chosen so that to be the sequence $\{\alpha_k\}$ increasing. There are many ways how to do it. For example, one can detect a point on the investigated body where it is expected that a selected displacement is the most sensitive on the applied forces. Then $\mbf b$ is the restriction of the displacement vector to its component. More universally, one can also set $\mbf b=\mbf l$. This choice represents the work of external forces, is meaningful even for continuous setting of the problem and was analyzed in \cite{SHHC15, CHKS15, HRS16, HRS16b} for generalized Hencky's plasticity. Clearly, if the increment $\alpha_k-\alpha_{k-1}$ significantly enlarges with increasing $k$ then it is convenient to reduce the increment of $\zeta$ for the next step.

The knowledge of suitable $\mbf b$ also enables to introduce the {\it indirect method} of incremental limit analysis where the increasing sequence $\{\alpha_k\}$ is given and the sequences $\{\zeta_k\}$ and $\{\mbf u_k\}$ are computed using the following auxilliary problem:
$$(\mathcal P_k)^\alpha \qquad \mbox{given }(\mbf b,\alpha_{k})\in\mathbb R^n\times\mathbb R_+,\;\mbox{find } (\mbf u_{k}, \zeta_{k})\in\mathbb R^n\times\mathbb R_+:\quad \left\{
\begin{array}{c}
\mbf{F}_{k}(\mbf{u}_{k})=\zeta_{k}\mbf{l},\\[1mm]
\mbf b^T\mbf u_{k}=\alpha_{k}.
\end{array}\right.$$ 
Clearly, if $(\mbf u_{k}, \zeta_{k})$ is the solution to $(\mathcal P_k)^\alpha$ then $\mbf u_{k}$ also solves $(\mathcal P_k)_\zeta$ for $\zeta_k$ and $\zeta_k\leq\zeta_{lim}\leq+\infty$.
Unlike to problem $(\mathcal P_k)_\zeta$, one can expect that problem $(\mathcal P_k)^\alpha$ has the solution for any $\alpha_k$. Since the parameter $\alpha$ can be enlarged arbitrary, the indirect method is more stable and does not include any blind guesswork unlike the direct one.
This is the main advantage of the indirect method. For the associative Mohr-Coulomb model, one can expect that $\zeta_k\rightarrow\zeta_{lim}$ as $\alpha_k\rightarrow+\infty$. This is proven in \cite{CHKS15, HRS16} for $\mbf b=\mbf l$ and the generalized Hencky's plasticity. For the nonassociative Mohr-Coulomb model with $\psi<<\phi$, we observe that $\zeta_{\tilde k}\approx\zeta_{lim}$ for some finite $\tilde k$ and for $k>\tilde k$, the sequence $\{\zeta_k\}$ is nonincreasing. In such a case, the material exhibits softening behavior and the direct method is too convenient. It is also worth mentioning that the indirect method is similar to the arc-length method introduced, e.g., in \cite{C97, NPO08}.

We solve problems $(\mathcal P)_\zeta$ and $(\mathcal P)^\alpha$ by the semismooth Newton method:

\begin{algorithm}[ALG-$\zeta$]
\hspace{0.2cm}
\begin{spacing}{1.2}
\begin{algorithmic}[1]
  \STATE initialization: $\mbf{u}_{k}^0$
  \FOR{$i=0,1,2,\ldots$}
    \STATE find $\delta \mbf{u}^{i}\in\mbf{V}$: $\;\mbf {K}_k(\mbf{u}_{k}^i)\delta  \mbf{u}^{i}=\zeta_k\mbf{l}-\mbf{F}_{k}(\mbf{u}_{k}^i)$
    \STATE compute $\mbf{u}_{k}^{i+1}=\mbf{u}_{k}^i+\delta \mbf{u}^{i}$
    \STATE {\bf if
    }{$\|\delta
    \mbf{u}^{i}\|/(\|\mbf{u}_{k}^{i+1}\|+\|\mbf{u}_{k}^i\|)\leq\epsilon_{Newton}$} {\bf then stop}
  \ENDFOR
  \STATE set $\mbf{u}_{k}=\mbf{u}_{k}^{i+1}$.
\end{algorithmic}
\end{spacing}
\end{algorithm}

\begin{algorithm}[ALG-$\alpha$]
\hspace{0.2cm}
\begin{spacing}{1.2}
\begin{algorithmic}[1]
  \STATE initialization: $\mbf{u}_{k}^0$, $\zeta_k^0$
  \FOR{$i=0,1,2,\ldots$}
    \STATE find $\mbf{v}^{i},\;\mbf{w}^{i}\in\mbf{V}$: $\;\mbf {K}_k(\mbf{u}_{k}^i)\mbf{v}^{i}=\zeta_k^i\mbf{l}-\mbf{F}_{k}(\mbf{u}_{k}^i),\;\;\mbf {K}_k(\mbf{u}_{k}^i)\mbf{w}^{i}=\mbf{l}$
    \STATE compute $\delta\zeta^i=[\alpha_k-\mbf b^T(\mbf{u}_{k}^i+\mbf{v}^{i})]/\mbf b^T\mbf{w}^{i}$
    \STATE compute $\delta \mbf{u}^{i}=\mbf{v}^{i}+\delta\zeta^i\mbf{w}^{i}$
    \STATE set $\mbf{u}_{k}^{i+1}=\mbf{u}_{k}^i+\delta \mbf{u}^{i},\; \zeta_{k}^{i+1}=\zeta_{k}^i+\delta \zeta^{i}$
    \STATE {\bf if
    }{$\|\delta
    \mbf{u}^{i}\|/(\|\mbf{u}_{k}^{i+1}\|+\|\mbf{u}_{k}^i\|)\leq\epsilon_{Newton}$} {\bf then stop}
  \ENDFOR
  \STATE set $\mbf{u}_{k}=\mbf{u}_{k}^{i+1}$, $\zeta_k=\zeta_k^{i+1}$.
\end{algorithmic}
\end{spacing}
\end{algorithm}
If $\mbf T\left(\cdot\,;\mbf{\varepsilon^p}(t_{k-1}),\bar{\varepsilon}^p(t_{k-1})\right)$  is semismoothness in $\mathbb R^{3\times 3}_{sym}$ then one can easily show that $\mbf F_k$ is semismooth in $\mathbb R^n$. The semismoothness is an essential assumption ensuring local superlinear convergence of these algorithms (see, e.g., \cite{CHKS15}). Further, we initialize ALG-$\zeta$ and ALG-$\alpha$ using the linear extrapolation of the solutions from two previous steps. In particular, we prescribe
$$\mbf{u}_{k}^0=\mbf u_{k-1}+\frac{\alpha_k-\alpha_{k-1}}{\alpha_{k-1}-\alpha_{k-2}}(\mbf u_{k-1}-\mbf u_{k-2}),\quad \zeta_{k}^0=\zeta_{k-1}+\frac{\alpha_k-\alpha_{k-1}}{\alpha_{k-1}-\alpha_{k-2}}(\zeta_{k-1}-\zeta_{k-2})$$
in ALG-$\alpha$  for $k\geq 2$, and analogously, in ALG-$\zeta$. We observe that this initialization is more convenient than $\mbf{u}_{k}^0=\mbf u_{k-1}$, $\zeta_k^0=\zeta_{k-1}$.

The direct and indirect methods of incremental limit analysis are compared in Section \ref{subsec_comparison}.


\section{Numerical experiments - slope stability}
\label{sec_experiments}

We have implemented the direct and indirect methods of incremental limit analysis in MatLab for 3D slope stability problem and its plane strain reduction. These experimental codes denoted as SS-MC-NP-3D, SS-MC-NH and SS-MC-NH-Acontrol are available in \cite{Mcode}. The codes are vectorized and include the improved return-mapping scheme for the Mohr-Coulomb model in combination with ALG-$\zeta$ or ALG-$\alpha$.  One can choose: a) several types of finite elements with appropriate numerical quadratures; b) locally refined meshes with various densities. 

\begin{figure}[htbp]
\center
  \includegraphics[width=0.6\textwidth]{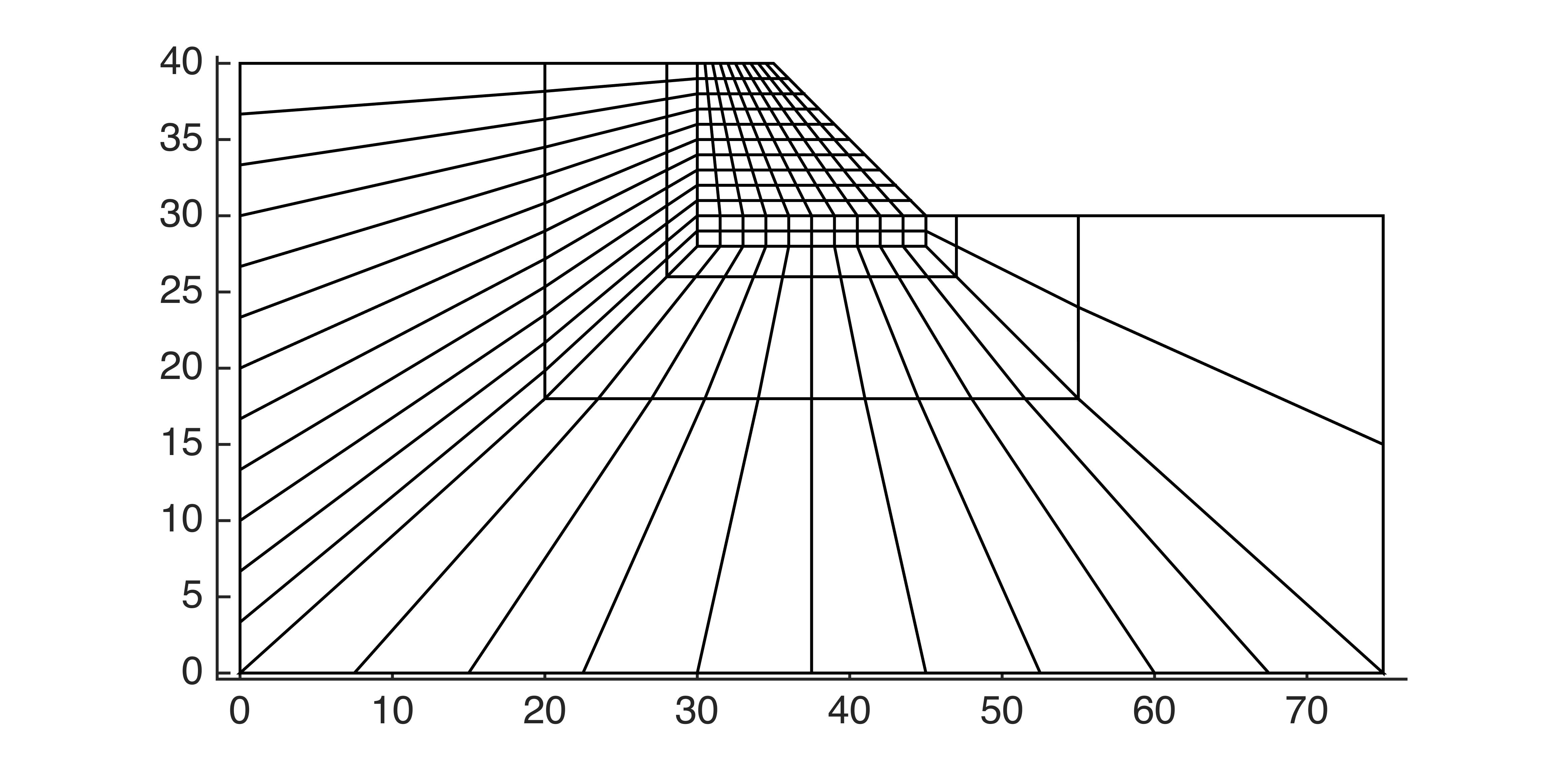}
   \caption{\small{Cross section of the body with the coarsest mesh for $Q2$ elements.}}
   \label{fig.mesh_Q2}
\end{figure}

We consider the benchmark plane strain problem introduced in \cite[Page 351]{NPO08} and its extension for 3D case.
The 2D cross-section of the body with the coarsest mesh considered in \cite[SS-MC-NH]{Mcode} is depicted in Figure \ref{fig.mesh_Q2}. The 3D geometry and the corresponding hexahedral mesh arise from 2D by extruding. The slope height is 10 m and its inclination is $45^\circ$. On the bottom, we assume that the body is fixed and, on the lateral sides,  zero normal displacements are prescribed. The body is subjected to self-weight. We set the specific weight $\rho g=20\, $kN/m$^3$ with $\rho$ being the mass density and $g$ the gravitational acceleration. Such a volume force is multiplied by the load factor $\zeta$. The parameter $\alpha$ is here the settlement at the corner point $A$ on the top of the slope to be in accordance with \cite{NPO08}.
Further, we set $E=20\,000\,$kPa, $\nu=0.49$, $\phi=20^{\circ}$ and $c=50\,$kPa, where $c$ denotes the cohesion for the perfect plastic model. Hence, $G=67\,114\, $kPa and $K=3\ 333\ 333\,$kPa. The remaining parameters of the Mohr-Coulomb model will be introduced below depending on a particular experiment.

We introduce one experiment for the plane strain (2D) problem and one for the 3D problem. The primary aim of these experiments is to numerically illustrate that the formulas derived in Sections \ref{sec_time_discret}, \ref{sec.Stress-strain_relation} and Appendix A work well. This can be confirmed by observing the superlinear convergence of ALG-$\zeta$ and ALG-$\alpha$ and their stability in vicinity of the limit load. We also prescribe a high precision of these algorithms by the setting $\epsilon_{Newton}=10^{-12}$ in both experiments. Other aims will be specified below.

\subsection{Comparison of the direct and indirect methods in 2D}
\label{subsec_comparison}

We compare the direct method (code SS-MC-NH) and the indirect method (code SS-MC-NH-Acontrol) of the incremental limit analysis on the slope stability benchmark in 2D. We consider the associative Mohr-Coulomb model containing the nonlinear isotropic hardening defined as in \cite{SCKKZB15}:
$$H(\bar\varepsilon^p)=\min\left\{c-c_0,\;\tilde H\bar\varepsilon^p-\frac{\tilde H^2}{4(c-c_0)}(\bar\varepsilon^p)^2\right\},\quad c_0=40\, \mbox{kPa}, \; \tilde H=10000\,\mbox{kPa}.$$
Here, $\tilde H$ represents the initial slope of $H$ and the material response is perfect plastic for sufficiently large values of $\bar\varepsilon^p$. The function $H$ is smooth and its influence on the limit load factor is negligible based on expertise introduced in \cite{SCKKZB15}. We set $\psi=\phi$ to have the associative model. 

Further, we use the $Q2$ elements (i.e. eight-noded quadrilaterals) with $3\times 3$ integration quadrature and the mesh with 37265 nodal points including the midpoints and with 110592 integration points. The mesh has a similar scheme as in  Figure \ref{fig.mesh_Q2} but, of course, it is much more finer. Since the Matlab code is vectorized, we fix 10 inner Newton's iterations for finding the unknown plastic multipliers in each integration point.

Recall that in each step of the direct method, we solve problem $(\mathcal P)_\zeta$ using ALG-$\zeta$. We set the initial load increment $\delta\zeta_0=0.5$.  If  ALG-$\zeta$ converges during 50 iterations for step $k\geq 1$ and if the computed increment of the settlement satisfies $\alpha_k-\alpha_{k-1}<0.5\,$m  then we set $\delta\zeta_{k+1}=\delta\zeta_k$. Otherwise, the increment is divided by two. Within the indirect method where  problem $(\mathcal P)^\alpha$ is solved using ALG-$\alpha$ we set the initial increment $\delta\alpha_0=0.0414$ of the settlement to have comparable results with the direct method. If the computed load increment satisfies $|\zeta_k-\zeta_{k-1}|>5e-3$ then we set $\delta\alpha_{k+1}=\delta\alpha_{k}$. Otherwise, $\delta\alpha_{k+1}=2\delta\alpha_{k}$. The loading process is terminated when the computed settlement exceeds 4 meters for both methods. 

\begin{figure}[htbp]
\begin{minipage}[t]{0.47\textwidth}
  \center
  \includegraphics[width=\textwidth]{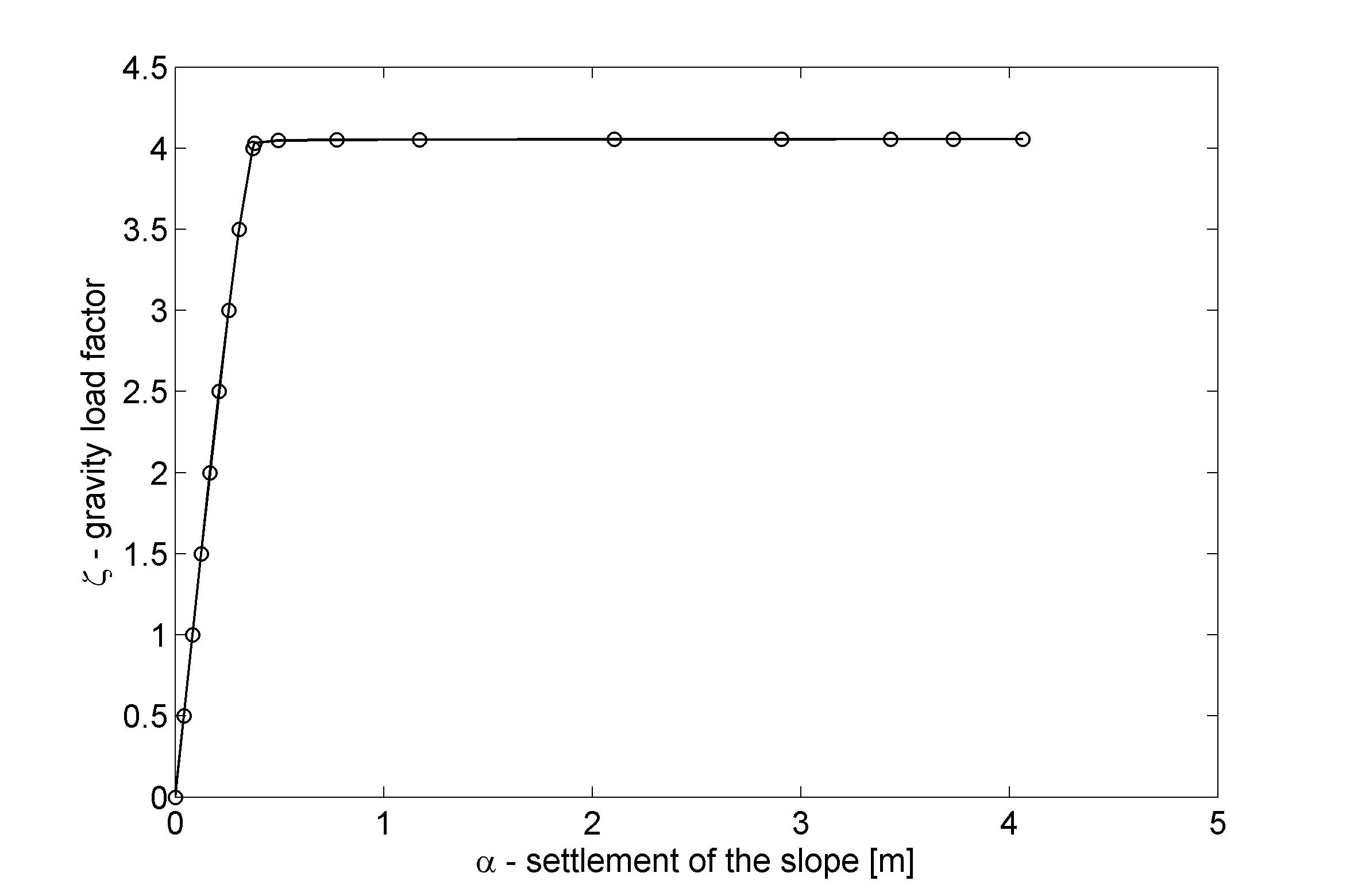}
   \caption{\small{Load path for the direct method.}}
   \label{fig.load_path}
\end{minipage}
\hfill
\begin{minipage}[t]{0.47\textwidth}
  \center
   \includegraphics[width=\textwidth]{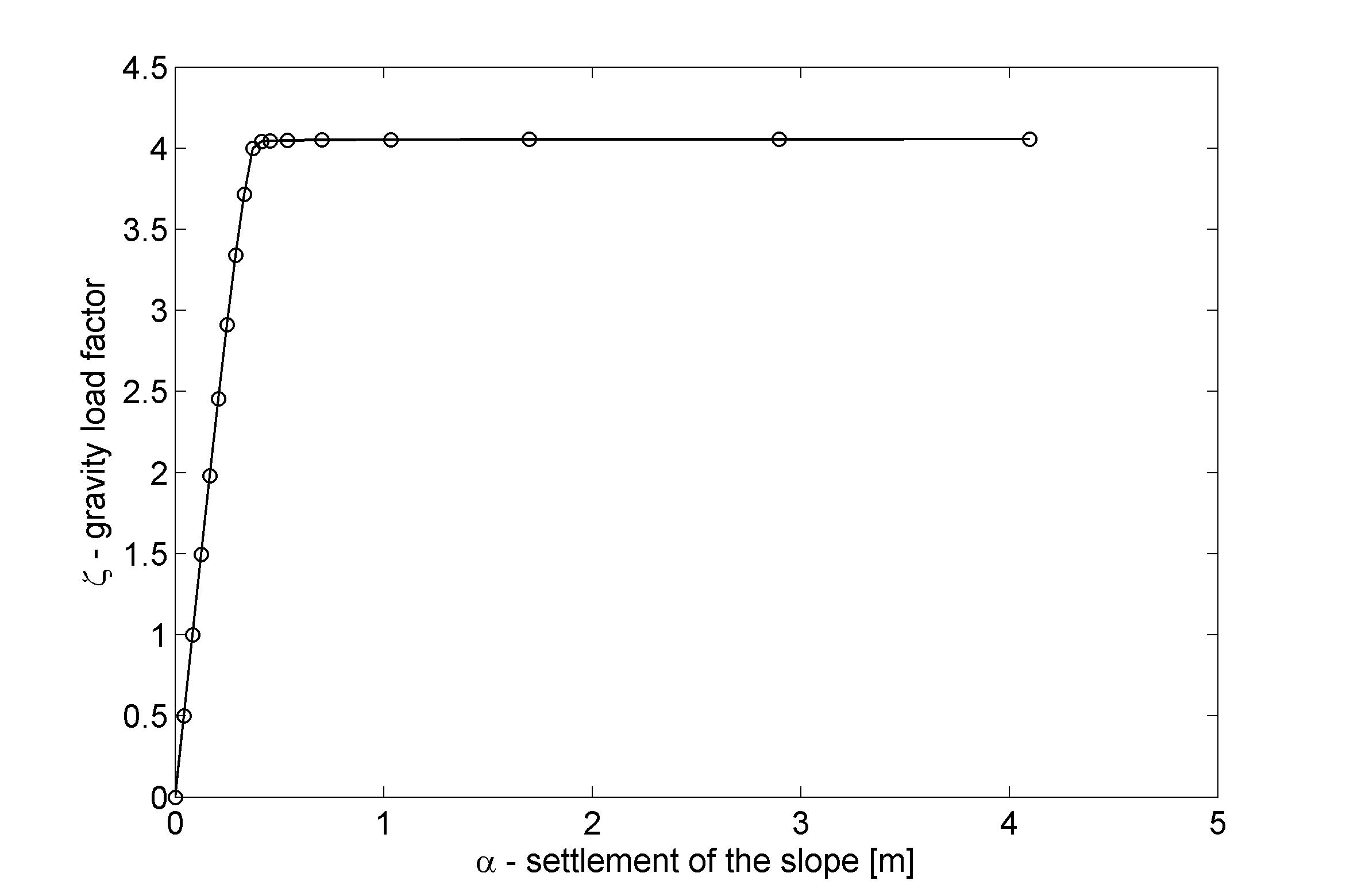}
   \caption{\small{Load path for the indirect method.}}
   \label{fig.load_path_A-control}
\end{minipage}
\end{figure}

\begin{figure}[htbp]
\begin{minipage}[t]{0.47\textwidth}
  \center
  \includegraphics[width=\textwidth]{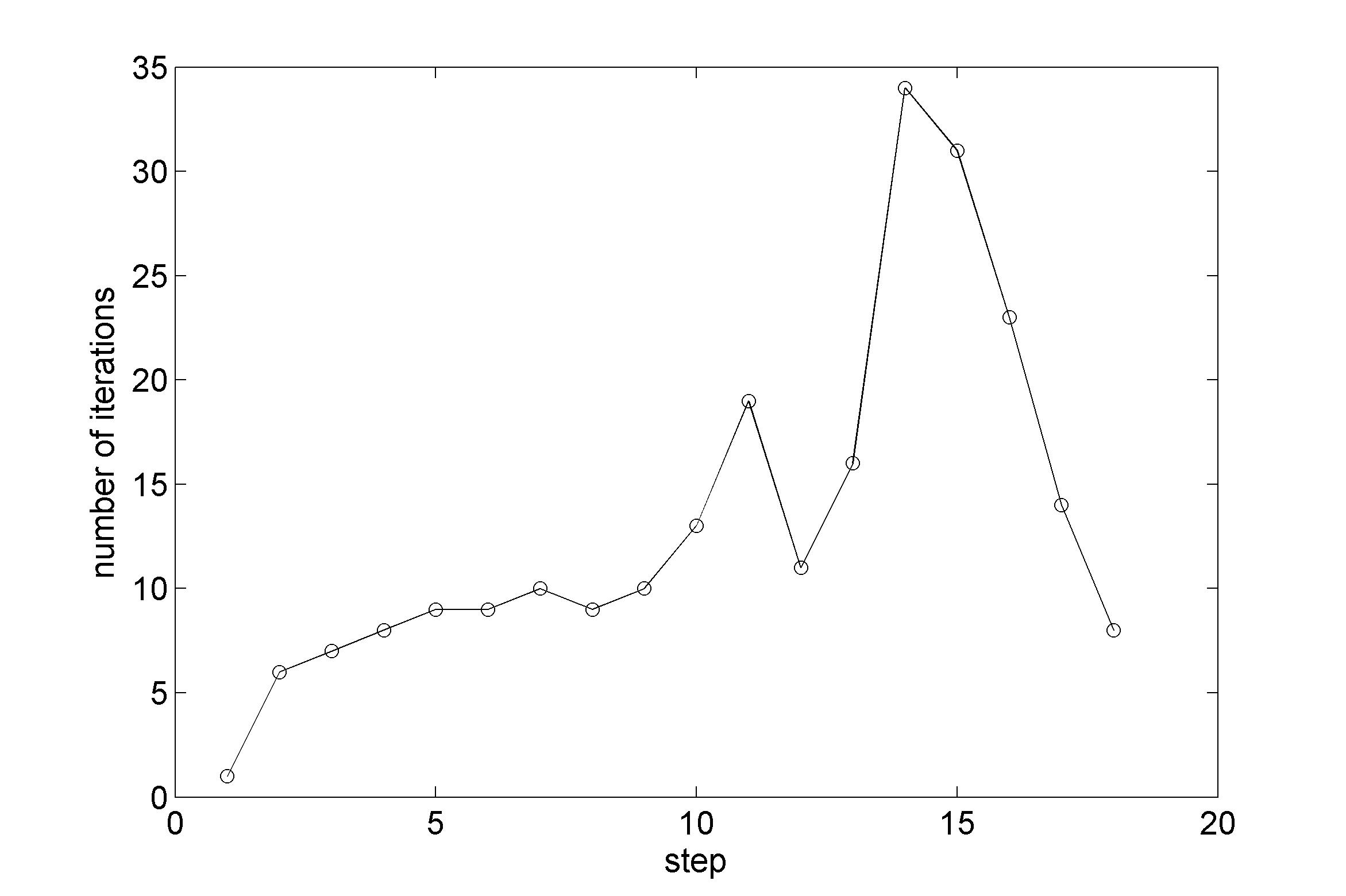}
   \caption{\small{Number of iterations for ALG-$\zeta$.}}
   \label{fig.iterations}
\end{minipage}
\hfill
\begin{minipage}[t]{0.47\textwidth}
  \center
   \includegraphics[width=\textwidth]{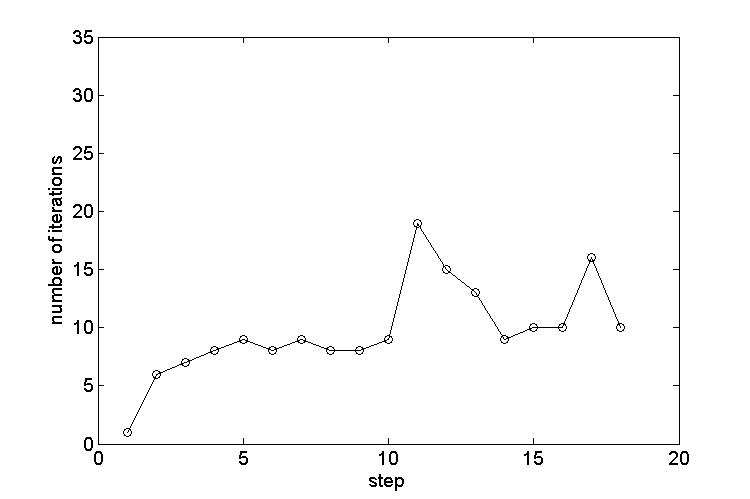}
   \caption{\small{Number of iterations for ALG-$\alpha$.}}
   \label{fig.iterations_A-control}
\end{minipage}
\end{figure}

\begin{figure}[htbp]
\begin{minipage}[t]{0.47\textwidth}
  \center
  \includegraphics[width=\textwidth]{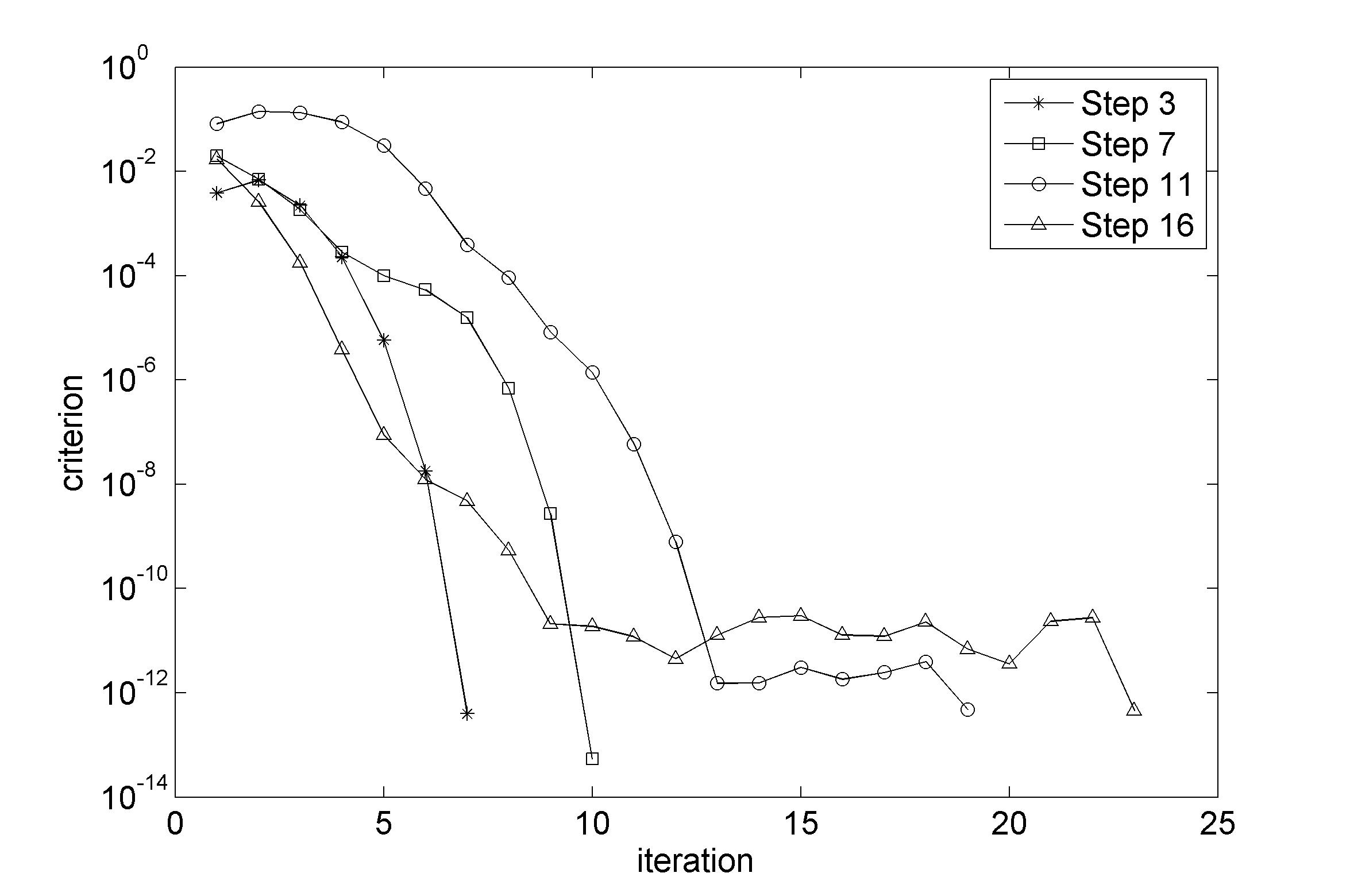}
   \caption{\small{Convergence at selected steps for the direct method.}}
   \label{fig.convergence}
\end{minipage}
\hfill
\begin{minipage}[t]{0.47\textwidth}
  \center
   \includegraphics[width=\textwidth]{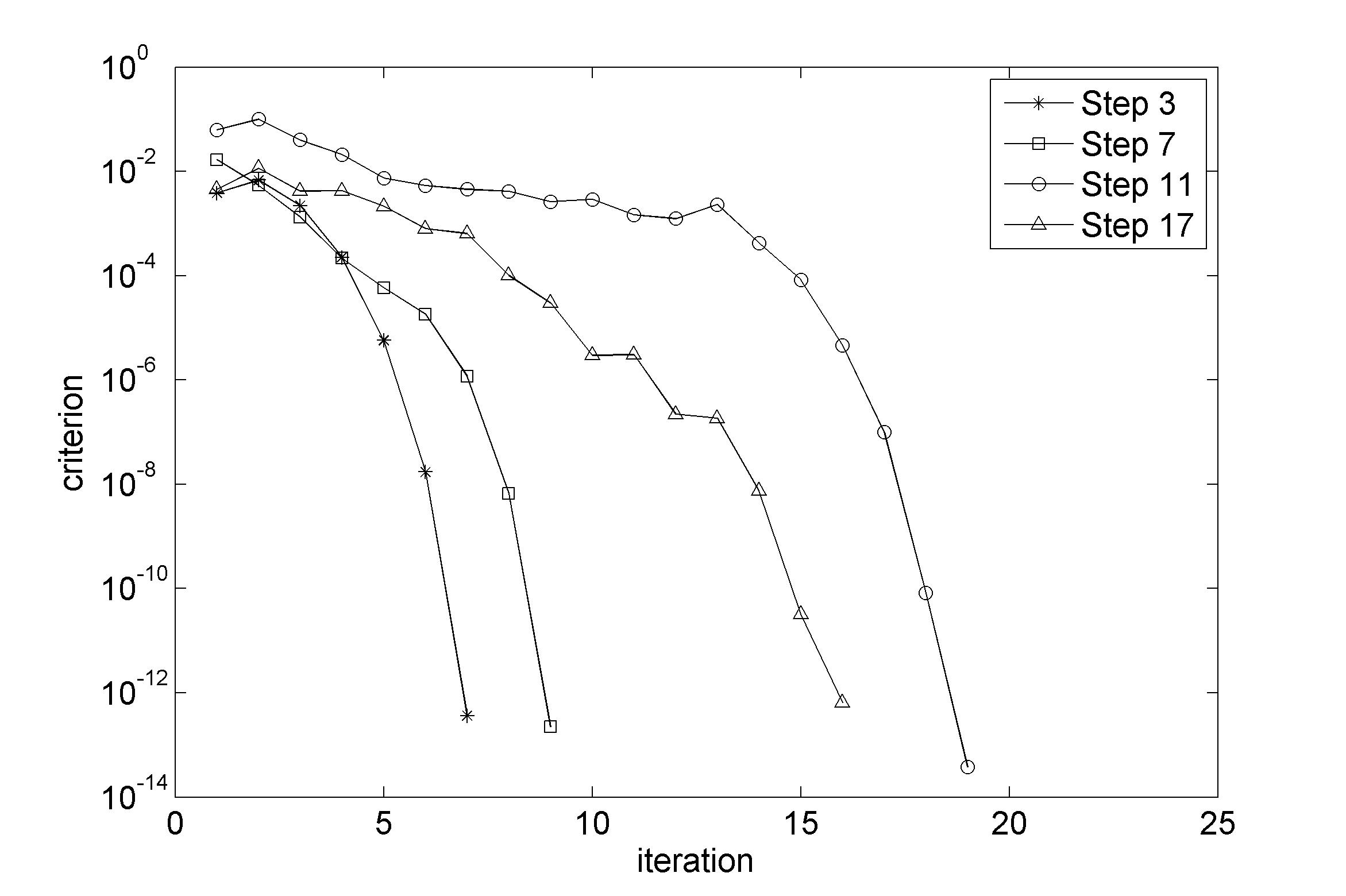}
   \caption{\small{Convergence at selected steps for the indirect method.}}
   \label{fig.convergence_A-control}
\end{minipage}
\end{figure}

The comparison of the direct and indirect methods is depicted in Figures \ref{fig.load_path}-\ref{fig.convergence_A-control}. The resulting loading paths practically coincide for both methods and they are in accordance with \cite{NPO08, SCKKZB15, HRS16b}. The computed limit value is equal to 4.057 which is close to the estimate 4.045 known from \cite{CL90}. Both methods need 18 load step and have superlinear convergence in each step. Their convergence is similar up to step 11. However, other comparisons turn out that the indirect method behaves better than the direct one. First, the indirect method has less number of iterations between steps 12 and 18. Secondly, the direct method contained 8 additional load steps without successful convergence while the indirect one convergences in each step. The successful load steps for both methods are depicted  by the circular points in Figures \ref{fig.load_path} and \ref{fig.load_path_A-control}, respectively. We see that the positions of these points are more convenient in Figure \ref{fig.load_path_A-control} than in Figure \ref{fig.load_path} with respect to the curvature of the loading path. Thirdly, we see in Figure \ref{fig.convergence} that the convergence in steps 11 and 16 is superlinear only up to 1e-10. Then, values of the stopping criterion oscillate. This is also observed for a few other steps of the direct method (e.g., steps 14 and 15). For the indirect method, this is not observed at any step. Finally, the computational times of the direct and indirect methods on a current laptop were approximately 9 and 7 minutes, respectively.

\subsection{Associative perfect plastic 3D problem}

Within the 3D slope stability experiment (code  SS-MC-NP-3D), we compare the loading paths for the Q1 and Q2 hexahedral elements with 8 and 20 nodes, respectively. We consider $2\times 2\times 2$ and $3\times 3\times 3$ noded integration quadratures for these element types, respectively. Two hexahedral meshes are prepared for this experiment. For the Q1 elements, the meshes contain 5103 and 37597 nodal points, 34560 and 276480 integration points, respectively. For the Q2 elements, the meshes contain 19581 and 147257 nodal points, 116640 and 933120 integration points, respectively. We use the direct method of the incremental limit analysis which is terminated when the computed settlement exceeds 5 meters.

\begin{figure}[htbp]
\center
  \includegraphics[width=0.5\textwidth]{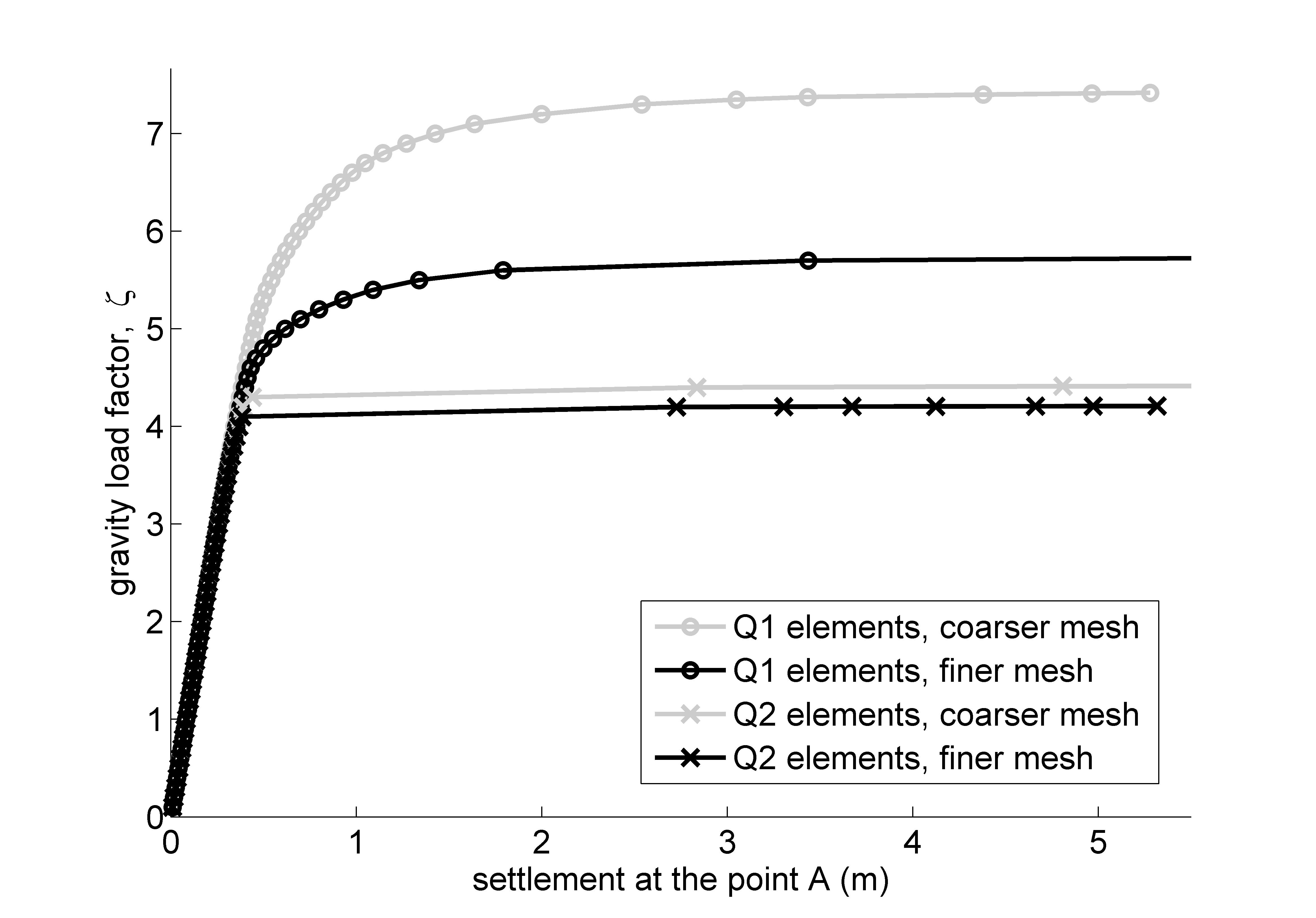}
   \caption{\small{Comparison of the loading paths for $Q1$ and $Q2$ elements.}}
   \label{fig.load_path_3D}
\end{figure}

The corresponding loading paths are depicted in Figure \ref{fig.load_path_3D}. We observe that the estimated limit values of $\zeta$ are close to the expected value of 4.045 for the $Q2$ elements but not for the $Q1$ elements. To estimate $\zeta_{lim}$ using the $Q1$ elements, it would be necessary to use much finer meshes. Figures \ref{fig.displacement} and \ref{fig.multiplier} illustrate failure at the end of the loading process for the Q2 elements and the finer mesh. 

\begin{figure}[htbp]
\begin{minipage}[t]{0.47\textwidth}
  \center
  \includegraphics[width=\textwidth]{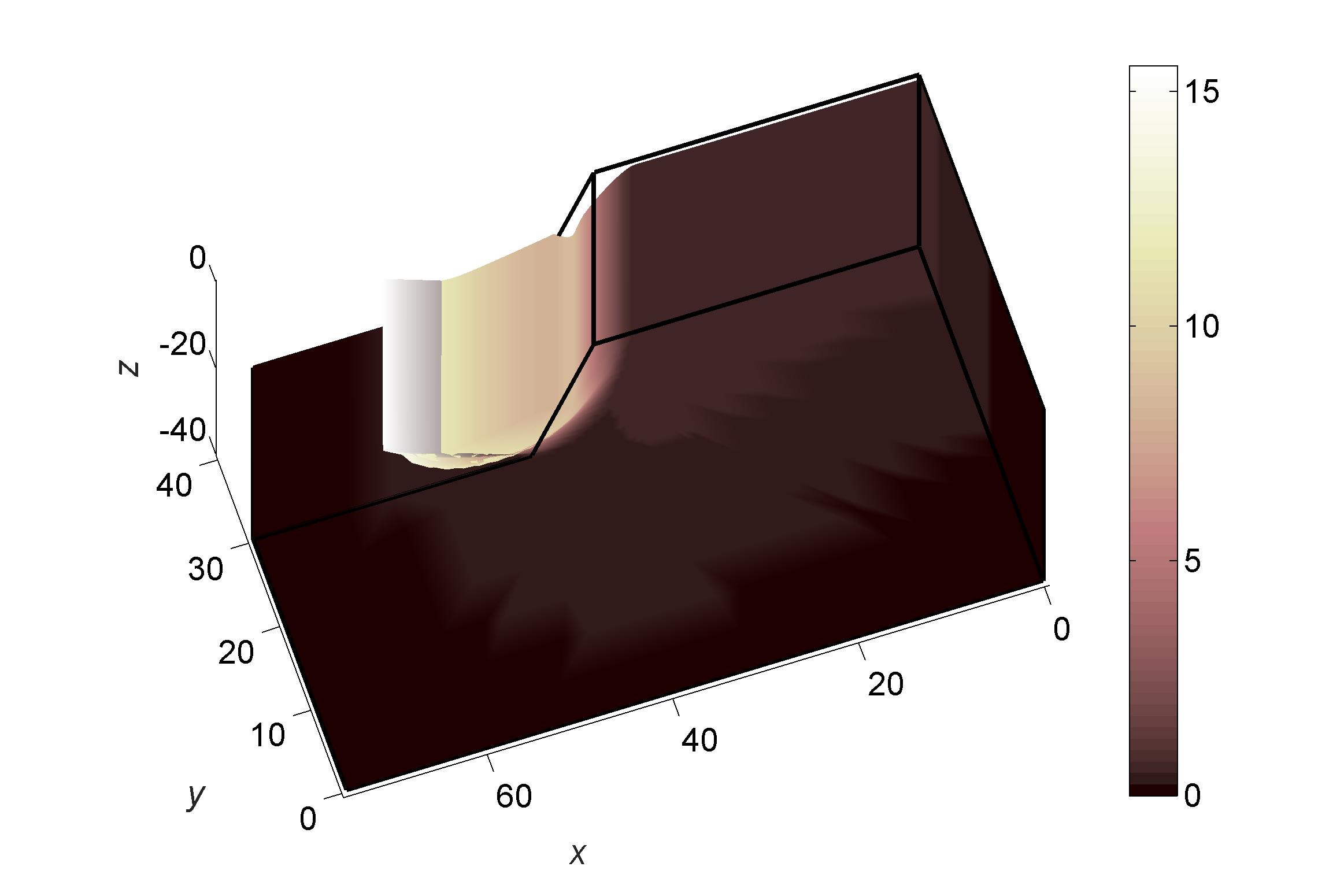}
   \caption{\small{Total displacement and deformed shape at the end of the loading process.}}
   \label{fig.displacement}
\end{minipage}
\hfill
\begin{minipage}[t]{0.47\textwidth}
  \center
   \includegraphics[width=\textwidth]{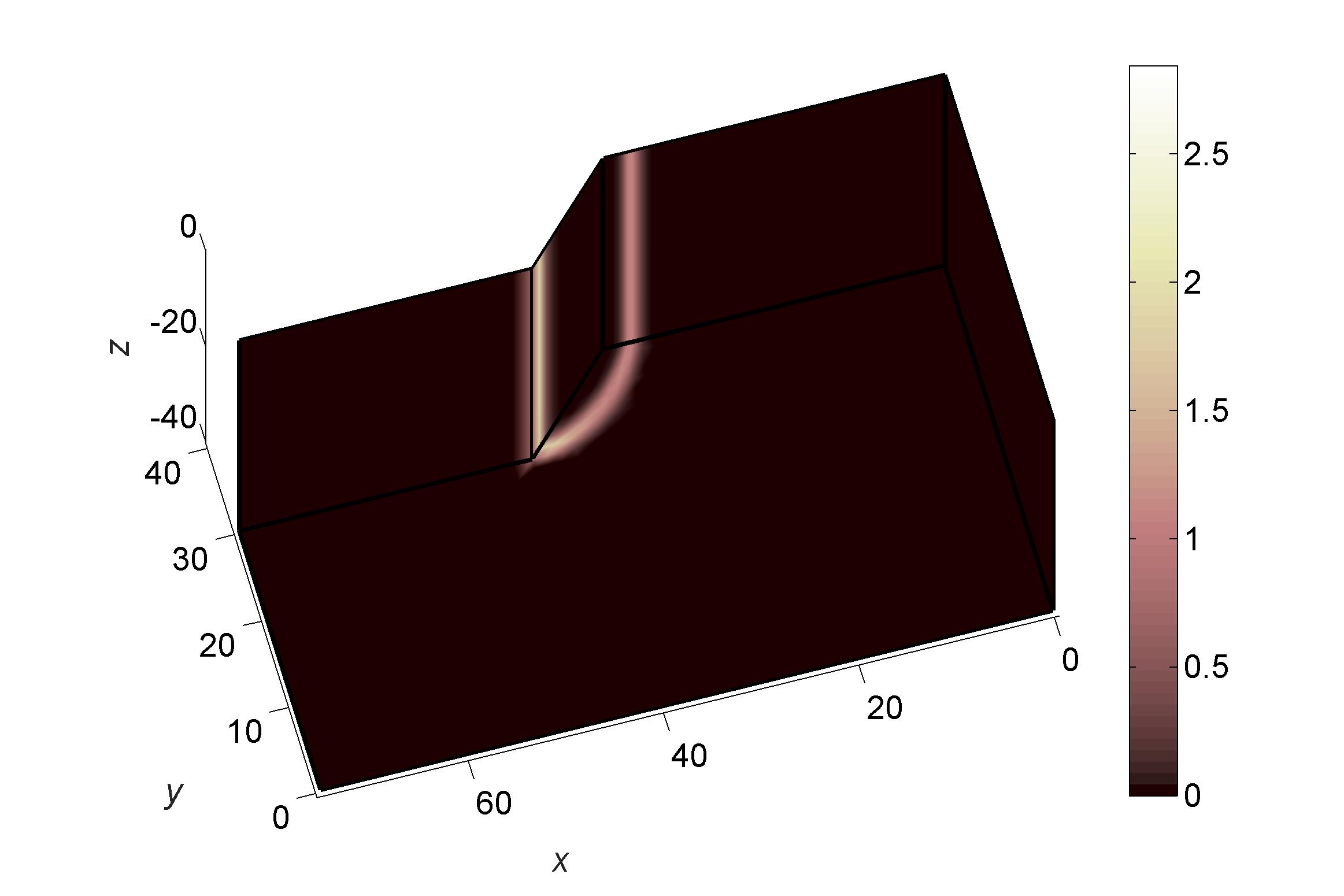}
   \caption{\small{Plastic multipliers at the end of the loading process.}}
   \label{fig.multiplier}
\end{minipage}
\end{figure}

\section{Conclusion}
\label{sec_conclusion}

This paper extended the subdifferential-based constitutive solution technique proposed in \cite{SCKKZB15} to elastoplastic models containing the Mohr-Coulomb yield criterion. It enabled deeper analysis of the constitutive problem discretized by the implicit Euler method and consequently led to several improvements within solution schemes. For example, a priori decision criteria characterizing each type of the return-mapping were derived even if the solution could not be found in closed form. The construction of the consistent tangent operator was also simplified. Moreover, the paper brought self-contained derivation of the constitutive operators which is not too often in Mohr-Coulomb plasticity.

The improved constitutive solution schemes were implemented within slope stability problems in 2D and 3D. To this end, the direct and also the indirect methods of the incremental limit analysis were used and combined with the semismooth Newton method. Local superlinear convergence in each step of both methods was observed. Further, it was illustrated that the indirect method led to more stable control of the loading process or that higher order finite elements reduced strong dependence on mesh. 

\section*{Acknowledgements}
The authors would like to thank to Pavel Mar\v s\'alek for generating the quadrilateral meshes with and without midpoints in 2D and 3D.
This work was supported by The Ministry of Education, Youth and Sports (of the Czech Republic)
from the National Programme of Sustainability (NPU II), project
``IT4Innovations excellence in science - LQ1602".


\section*{Appendix}
\label{sec_appendix}

\subsection*{A. Simplified constitutive handling for the plane strain problem}

The results of Section \ref{sec_time_discret} and \ref{sec.Stress-strain_relation} are, of course, valid also for the plane strain problem. Nevertheless, in this case, one can simplify the forms of eigenprojections and their derivatives since we work only on the subspace $\mathbb R_{PS}$ of $\mathbb R^{3\times 3}_{sym}$ containing trial tensors in the form
$$\mbf\eta=\left(
\begin{array}{ccc}
\eta_{11} & \eta_{12} & 0\\
\eta_{12} & \eta_{22} & 0\\
0 & 0 & \eta_{33}
\end{array}
\right).
$$
To distinguish the derivatives of functions defined in $\mathbb R_{PS}$, we use the symbol $\tilde{\mathcal D}$ instead of $\mathcal D$.  Define the functions
\begin{eqnarray*}
\tilde\omega_1(\mbf\eta)&:=&\frac{1}{2}\left[\eta_{11}+\eta_{22}+\sqrt{(\eta_{11}-\eta_{22})^2+4\eta_{12}^2}\right],\\
\tilde\omega_2(\mbf\eta)&:=&\frac{1}{2}\left[\eta_{11}+\eta_{22}-\sqrt{(\eta_{11}-\eta_{22})^2+4\eta_{12}^2}\right],\\
\tilde\omega_3(\mbf\eta)&:=&\eta_{33}
\end{eqnarray*}
in $\mathbb R_{PS}$. Then $\tilde\eta_i=\tilde\omega_i(\mbf\eta)$, $i=1,2,3$, are the eigenvalues of $\mbf\eta$. These values are not ordered in general. We only know that $\tilde\eta_1\geq\tilde\eta_2$. Further, define
$$\tilde{\mbf\eta}(\mbf\eta):=\left(
\begin{array}{r r r}
\eta_{11} & \eta_{12} & 0\\
\eta_{12} & \eta_{22} & 0\\
0 & 0 & 0
\end{array}
\right),\quad \tilde{\mbf I}:=\left(
\begin{array}{r r r}
1 & 0 & 0\\
0 & 1 & 0\\
0 & 0 & 0
\end{array}
\right), \quad \tilde{\mbf E}_3(\mbf\eta):=\left(
\begin{array}{r r r}
0 & 0 & 0\\
0 & 0 & 0\\
0 & 0 & 1
\end{array}
\right),$$
$$\tilde{\mbf E}_1(\mbf\eta):=\left\{
\begin{array}{c l}
\frac{\tilde{\mbf\eta}-\tilde\eta_2\tilde{\mbf I}}{\tilde\eta_1-\tilde\eta_2}, & \tilde\eta_1>\tilde\eta_2\\[2mm]
\tilde{\mbf I}, & \tilde\eta_1=\tilde\eta_2
\end{array}
\right.,\quad \tilde{\mbf E}_2(\mbf\eta):=\tilde{\mbf I}-\tilde{\mbf E}_1(\mbf\eta)$$
and
$$\tilde{\mathbb E}_1(\mbf\eta):=\left\{
\begin{array}{c l}
\frac{1}{\tilde\eta_1-\tilde\eta_2}[\tilde{\mathbb I}-\tilde{\mbf E}_1\otimes\tilde{\mbf E}_1-\tilde{\mbf E}_2\otimes\tilde{\mbf E}_2], & \tilde\eta_1>\tilde\eta_2\\[2mm]
\mathbb O, & \tilde\eta_1=\tilde\eta_2
\end{array}
\right.,\quad \tilde{\mathbb E}_2(\mbf\eta):=-\tilde{\mathbb E}_1(\mbf\eta),\quad \tilde{\mathbb E}_3(\mbf\eta):=\mathbb O,$$
where $\mathbb O$ denotes the zeroth fourth order tensor and $[\tilde{\mathbb I}]_{ijkl}=\delta_{ik}\delta_{jl}$, $i,j,k,l=1,2$, otherwise $[\tilde{\mathbb I}]_{ijkl}=0$. Clearly, $\mathcal D\tilde\omega_3(\mbf\eta)=\tilde{\mathcal D}\tilde\omega_3(\mbf\eta)=\tilde{\mbf E}_3(\mbf\eta)$. If $\tilde\eta_1>\tilde\eta_2$ then
$$\tilde{\mbf E}_i(\mbf\eta)=\tilde{\mathcal D}\tilde\omega_i(\mbf\eta),\quad \tilde{\mathbb E}_i(\mbf\eta)=\tilde{\mathcal D}\tilde{\mbf E}_i(\mbf\eta)\qquad\mbox{in } \mathbb R_{PS},\quad i=1,2.$$
It is worth mentioning that these formulas need not hold in $\mathbb R^{3\times 3}_{sym}$ in general. Similar formulas are also introduced in \cite[Appendix A]{NPO08}.

Now, it is necessary to reorder the eigenvalues of $\mbf\eta\in\mathbb R_{PS}$. Denote the ordered eigenvalues as $\eta_1,\eta_2,\eta_3$, i.e., $\eta_1:=\max\{\tilde\eta_1,\tilde\eta_3\}$ and $\eta_3:=\min\{\tilde\eta_2,\tilde\eta_3\}$. Consequently, we reorder the functions $\tilde\omega_i$, $\tilde{\mbf E}_i$, $\tilde{\mathbb E}_i$, $i=1,2,3$, leading to the functions $\omega_i$, $\mbf E_i$, $\mathbb E_i$, $i=1,2,3$. To complete the notation, one can easily set 
$$\mbf E_{12}(\mbf\eta):=\mbf E_1(\mbf\eta)+\mbf E_2(\mbf\eta),\quad \mbf E_{23}(\mbf\eta):=\mbf E_2(\mbf\eta)+\mbf E_3(\mbf\eta)\quad \forall\mbf\eta\in \mathbb R_{PS}.$$

Finally, one can straightforwardly use the functions $\omega_i$, $\mbf E_i$, $\mathbb E_i$, $i=1,2,3$, $\mbf E_{12}$ and $\mbf E_{23}$ within Section \ref{sec.Stress-strain_relation} when the plane strain assumptions are considered.

\subsection*{B. Algebraic representation of second and fourth order tensors}

Within our implementation, we use the standard algebraic representation of stress and strain second order tensors specified below but a little bit different representation of fourth order tensors in comparison to \cite[Appendix D]{NPO08}. We assume that a fourth order tensor $\mathbb C$ represents a linear mapping from $\mathbb R^{3\times 3}_{sym}$ into $\mathbb R^{3\times 3}_{sym}$. Therefore, the components $[\mathbb C]_{ijkl}\equiv C_{ijkl}$  of $\mathbb C$ satisfy
$$\sum_{k,l}C_{ijkl}\eta_{kl}=\sum_{k,l}C_{jikl}\eta_{kl}\quad\forall \mbf\eta\in \mathbb R^{3\times 3}_{sym},\;\;\eta_{kl}=[\mbf\eta]_{kl}.$$
The choice $\eta_{kl}=\delta_{mk}\delta_{nl}+\delta_{nk}\delta_{ml}$ implies that $\mbf\eta\in \mathbb R^{3\times 3}_{sym}$ for any $m,n=1,2,3$ and
\begin{equation}
\tag{B.1}
C_{ijmn}+C_{ijnm}=C_{jimn}+C_{jinm}\quad\forall i,j,m,n=1,2,3.
\label{C_sym}
\end{equation}
Notice that in \cite[Appendix D]{NPO08}, the stronger assumptions on the components are required: $C_{ijmn}=C_{ijnm}=C_{jimn}=C_{jinm}$.

We distinguish two cases: the 3D problem and its plane strain reduction.

\subsubsection*{The 3D problem}

Let $\mbf \tau, \mbf\eta\in \mathbb R^{3\times 3}_{sym}$ denote stress and strain tensors, respectively. Then they are represented by vectors $\mathbf t=( \tau_{11}, \tau_{22}, \tau_{33}, \tau_{12}, \tau_{23}, \tau_{13})^T$ and $\mathbf n=(\eta_{11},\eta_{22},\eta_{33},2\eta_{12},2\eta_{23},2\eta_{13})^T$ where $\tau_{ij}$ and $\eta_{ij}$ are the components of $\mbf \tau$, and $\mbf\eta$, respectively. Clearly, $\mbf \tau:\mbf\eta=\mathbf t\cdot\mathbf n$. A fourth order tensor $\mathbb C$ is represented by matrix $\mbf C\in\mathbb R^{6\times 6}$. Since fourth order tensors are applied on strain tensors within the implementation, we require that
\begin{equation}
\tag{B.2}
\mbf\eta:\mathbb C:\mbf\varepsilon=\mathbf n\cdot\mbf C\mathbf e
\label{C_algebra}
\end{equation}
holds for any strain tensors $\mbf\eta$ and $\mbf\varepsilon$. Here, $\mathbf n$ and $\mathbf e$ denote the algebraic counterparts of $\mbf\eta$ and $\mbf\varepsilon$, respectively. From (\ref{C_sym}) and (\ref{C_algebra}), one can derive that
$$\mbf C=\left(
\begin{array}{c c c c c c}
C_{1111} & C_{1122} & C_{1133} & \frac{1}{2}[C_{1112}+C_{1121}] & \frac{1}{2}[C_{1123}+C_{1132}] & \frac{1}{2}[C_{1113}+C_{1131}]\\[1mm]
C_{2211} & C_{2222} & C_{2233} & \frac{1}{2}[C_{2212}+C_{2221}] & \frac{1}{2}[C_{2223}+C_{2232}] & \frac{1}{2}[C_{2213}+C_{2231}]\\[1mm]
C_{3311} & C_{3322} & C_{3333} & \frac{1}{2}[C_{3312}+C_{3321}] & \frac{1}{2}[C_{3323}+C_{3332}] & \frac{1}{2}[C_{3313}+C_{3331}]\\[1mm]
C_{1211} & C_{1222} & C_{1233} & \frac{1}{2}[C_{1212}+C_{1221}] & \frac{1}{2}[C_{1223}+C_{1232}] & \frac{1}{2}[C_{1213}+C_{1231}]\\[1mm]
C_{2311} & C_{2322} & C_{2333} & \frac{1}{2}[C_{2312}+C_{2321}] & \frac{1}{2}[C_{2323}+C_{2332}] & \frac{1}{2}[C_{2313}+C_{2331}]\\[1mm]
C_{1311} & C_{1322} & C_{1333} & \frac{1}{2}[C_{1312}+C_{1321}] & \frac{1}{2}[C_{1323}+C_{1332}] & \frac{1}{2}[C_{1313}+C_{1331}]
\end{array}
\right).$$
Indeed, the choices $\varepsilon_{kl}=\frac{1}{2}(\delta_{2k}\delta_{3l}+\delta_{2k}\delta_{3l})$, $\eta_{ij}=\frac{1}{2}(\delta_{1i}\delta_{2j}+\delta_{2i}\delta_{1j})$ imply $\mathbf e=(0,0,0,0,1,0)^T$ and $\mathbf n=(0,0,0,1,0,0)^T$. Hence,
$$[\mbf C]_{45}=\mathbf n\cdot \mbf C\mathbf e\stackrel{(\ref{C_algebra})}{=}\eta:\mathbb C:\varepsilon=\frac{1}{4}[C_{1223}+C_{1232}+C_{2123}+C_{2132}]\stackrel{(\ref{C_sym})}{=}\frac{1}{2}[C_{1223}+C_{1232}].$$
Similarly, one can derive the forms of other components of $\mbf C$. Notice that the algebraic representation of $\mathbb C$ is more general than in \cite[Appendix D]{NPO08}.

We introduce three examples useful for the Mohr-Coulomb model:
\begin{enumerate}
\item Let $\mathbb C=\mathbb I$. Then $C_{ijkl}=\delta_{ik}\delta_{jl}$ and $\mbf C=\mbox{diag}(1,1,1,1/2,1/2,1/2)$. Notice that the same matrix is derived in \cite[Appendix D]{NPO08} although the tensor $\mathbb I_{S}$, $[\mathbb I_S]_{ijkl}=\frac{1}{2}(\delta_{ik}\delta_{jl}+\delta_{il}\delta_{jk})$, is used there instead of $\mathbb I$.
\item Let $\mathbb C=\mbf\tau\otimes\mbf\sigma$ where $\mbf\sigma,\mbf\tau$ are arbitrary chosen stress tensors. Denote $\mathbf s$ and $\mathbf t$ as the algebraic counterparts to $\mbf\sigma,\mbf\tau$, respectively. Then $\mbf C=\mathbf s\mathbf t^T$.
\item Let $\mathbb C=\mathcal D(\mbf\eta^2)$. Then $C_{ijkl}=\delta_{ik}\eta_{lj}+\delta_{jl}\eta_{ik}$ and
$$\mbf C=\left(
\begin{array}{c c c c c c}
2\eta_{11} & 0 & 0 & \eta_{12} & 0 & \eta_{13}\\[1mm]
0 & 2\eta_{22} & 0 & \eta_{12} & \eta_{23} & 0\\[1mm]
0 & 0 & 2\eta_{33} & 0 & \eta_{23} & \eta_{13}\\[1mm]
\eta_{12} & \eta_{12} & 0 & \frac{1}{2}[\eta_{11}+\eta_{22}] & \frac{1}{2}\eta_{13} & \frac{1}{2}\eta_{23}\\[1mm]
0 & \eta_{23} & \eta_{23} & \frac{1}{2}\eta_{13} & \frac{1}{2}[\eta_{22}+\eta_{33}] & \frac{1}{2}\eta_{12}\\[1mm]
\eta_{13} & 0 & \eta_{13} & \frac{1}{2}\eta_{23} & \frac{1}{2}\eta_{12} & \frac{1}{2}[\eta_{11}+\eta_{33}]
\end{array}
\right).$$
\end{enumerate}

\subsubsection*{The plane strain problem}

Let $\mbf \tau$ and $\mbf\eta$ denote stress and strain second order tensors, respectively. Then they are represented by the vectors $\mathbf t=( \tau_{11}, \tau_{22}, \tau_{12}, \tau_{33})^T$ and $\mathbf n=(\eta_{11},\eta_{22},2\eta_{12},\eta_{33})^T$ where $\tau_{ij}$ and $\eta_{ij}$ are components of $\mbf \tau$, and $\mbf\eta$, respectively. Clearly, $\mbf \tau:\mbf\eta=\mathbf t\cdot\mathbf n$. Notice that the component $\eta_{33}$ vanishes for the strain tensor but not for the plastic strain tensor.

The fourth order tensor $\mathbb C$ can be represented by matrix $\mbf C\in\mathbb R^{4\times 4}$. Similarly as for the 3D problem, one can derive that
$$\mbf C=\left(
\begin{array}{c c c c}
C_{1111} & C_{1122} & \frac{1}{2}[C_{1112}+C_{1121}] & C_{1133} \\[1mm]
C_{2211} & C_{2222} & \frac{1}{2}[C_{2212}+C_{2221}] & C_{2233} \\[1mm]
C_{1211} & C_{1222} & \frac{1}{2}[C_{1212}+C_{1221}] & C_{1233} \\[1mm]
C_{3311} & C_{3322} & \frac{1}{2}[C_{3312}+C_{3321}] & C_{3333} 
\end{array}
\right).$$
Finally, it is worth mentioning that for assembling the tangent stiffness matrix, it is sufficient to save only the components $(\mbf C)_{ij}$ where $i,j=1,2,3$.

\bibliographystyle{wileyj}
\bibliography{MC_SysCer}

\end{document}